\documentclass[letterpaper, 10 pt, conference]{ieeeconf} 
\usepackage{cite}
\usepackage{amsmath,amssymb,amsfonts}
\usepackage{algorithmic}
\usepackage{graphicx}
\usepackage{algorithm,algorithmic}
\usepackage{mathtools}
\usepackage[hidelinks]{hyperref}
\usepackage{textcomp}
\usepackage{subcaption}
\usepackage{comment}
\usepackage{epstopdf}
\newtheorem{lemma}{Lemma}
\newtheorem{corollary}{Corollary}

\newtheorem{theorem}{Theorem}
\newtheorem{assmp}{Assumption}

\newtheorem{remark}{Remark}
\newtheorem{defn}{Definition}
\IEEEoverridecommandlockouts
\begin{document}

\title{On unified asymmetric barrier Lyapunov functions}
\author{Susmitha T Rayabagi, Shashi Ranjan Kumar, Debasattam Pal, and Dwaipayan Mukherjee
\thanks{Manuscript submitted to IEEE TAC. This work was supported in part by an ANRF-funded project bearing project code: CRG/2023/002280.}
\thanks{The authors Susmitha T Rayabagi, Debasattam Pal, and Dwaipayan Mukherjee are with the Electrical Engineering Department, Indian Institute of Technology Bombay, 400076, India (e-mail: srayabagi@iitb.ac.in, debasattam@ee.iitb.ac.in, dm@ee.iitb.ac.in).}
\thanks{Shashi Ranjan Kumar is with the Aerospace Engineering Department, Indian Institute of Technology Bombay, 400076, India (e-mail: srk@aero.iitb.ac.in).}
% \thanks{Third C. Author is with 
% the Electrical Engineering Department, University of Colorado, Boulder, CO 
% 80309 USA, on leave from the National Research Institute for Metals, 
% Tsukuba, Japan (e-mail: author@nrim.go.jp).}
}

\maketitle

\begin{abstract}
Barrier Lyapunov functions (BLFs) have been a popular choice when dealing with constrained control problems. In the current article, we present a \textit{unified} asymmetric barrier Lyapunov function that generalizes the existing logarithmic symmetric Lyapunov function. We show that the proposed function is smooth and does not require the discontinuous switching function that is ubiquitous in the asymmetric barrier Lyapunov functions existing in the literature. Based on the proposed barrier Lyapunov function, a control law, guaranteeing exponentially fast output tracking, is designed for a class of single-input single-output nonlinear systems with output constraints. We show that the proposed control law unifies the control design and structure for a system with either symmetric or asymmetric output constraints. Furthermore, we constructively show that the proposed function can be bounded from above and below by symmetric class $\mathcal{K}$ functions which help in establishing local exponential convergence together with the proposed control law. Lastly, we provide numerical examples to illustrate the performance of the proposed control and compare the results with the existing logarithmic BLFs. 
%The proposed function is a \textit{logarithmic} barrier function that accommodates both symmetric and asymmetric constraints and thus unifies the standard logarithmic barrier Lyapunov function for symmetric constraints. 
%Based on which we propose an exponentially stabilizing control scheme to achieve output tracking for a class of single-input single-output nonlinear systems with asymmetric output constraints.  Furthermore, the proposed control scheme unifies the control design and structure for a system with both symmetric or asymmetric output constraints, while ensuring exponential convergence. 
\end{abstract}

\begin{keywords}
Asymmetric barrier function, Backstepping, Exponential stability, Lyapunov methods
\end{keywords}

\section{Introduction}\label{sec:introduction}
{Constraints} arising in the form of saturation, safety limits and/or performance requirements etc, are an integral part of the control design process for any physical system. A common instance for saturation arises from the boundedness of force/acceleration provided by actuators. Furthermore, constraints related to safety and performance are of major importance in robotics and multi-agent swarm systems. Various techniques have been used to solve the constrained control problems, such as MPC \cite{Morari_MPC_1995,SolopertoAllgower_MPC_23}, Governors \cite{NONHOFFMuller_Ref_gov_2023}, control barrier functions \cite{RauscherHirche_CBFs_16,alanAmes_CBFs_23}, and artificial potential functions \cite{rimon1990exact_potential_1990, WolfBurdick_PotentialFuntions_08}. Along with these techniques, the use of barrier Lyapunov functions to solve constrained problems has garnered significant research interest. In \cite{Ngo_integratorBackstepping_1}, the authors introduced a logarithmic storage function with a barrier function's characteristics, in what would later be termed as symmetric barrier Lyapunov function (SBLF), to impose a hard bound on the velocity for a $4^{\text{th}}$ order integrator cascade in strict feedback form. This was motivated by the idealized model of the longitudinal dynamics of an aircraft. With the introduced SBLF, the authors further extended their work to consider bounds on multiple states of feedback linearizable systems in \cite{Ngo_integratorBackstepping_2}. Based on the SBLF in \cite{Ngo_integratorBackstepping_1}, an asymmetric barrier Lyapunov function (ABLF) was proposed in \cite{tee2009barrier} that accommodated different upper and lower bounds of the output constraint, i.e., asymmetric output constraints. The proposed ABLF is composed of two SBLFs, same as the one introduced in \cite{Ngo_integratorBackstepping_1}, stitched together using a discontinuous switching function. The SBLF and ABLF, introduced in \cite{Ngo_integratorBackstepping_1} and \cite{tee2009barrier}, respectively, have been widely used in various applications where symmetric and asymmetric constraints on states/outputs arise \cite{KPT_application_microactuators09,PurohitJain26,WenshanWenkaiWangSui_Omnidirectional26,ranjan2022generalized,AdelMohamadYassine_UAVs,ghoshBhasin26}. They were further used to design controllers for nonlinear SISO systems in strict feedback form, with time varying output constraints, in \cite{tee2011control} and for nonlinear switched systems in lower triangular form with time varying and constant output constraints in \cite{niu2013barrier}. Apart from the two aforesaid logarithmic functions, a few other varieties of BLFs have been studied. For instance, the authors in \cite{chenSun2020} and \cite{xu2012stateconstrained} introduced \textit{tangent}-type BLFs that incorporate symmetric constraints on output. In \cite{TANG2013449}, a \textit{tangent}-type ABLF that utilized the same discontinuous switching function as in \cite{tee2009barrier} was proposed to enforce asymmetric output constraints. Integral barrier Lyapunov functionals (iBLF) were proposed in \cite{tee2012control} that incorporated symmetric state constraints. From the above discussions, it is clear that the ABLFs studied in the literature so far rely on a discontinuous switching function to incorporate the asymmetric constraints. 

The discontinuous nature of the switching function gives rise to challenges such as ensuring smoothness of the function. Additional constraints are imposed to ensure smoothness of the ABLF and its successive derivatives \cite{tee2009barrier,TANG2013449}. These constraints are dependent on the order of the system and hinder the scalability of the designed control for higher order systems. Additionally, as the existing ABLFs do not naturally incorporate symmetric constraints, there is a strict distinction between symmetric and asymmetric BLFs. This necessitates the design of two different control structures based on the type of constraints. An exception to the use of the switching function is the \textit{universal barrier function} proposed in \cite{JinUniversalBarrier}, where a smooth rational polynomial function is employed to account for symmetric and asymmetric output constraints, as well as the unconstrained case. Aside from the smooth ABLF in \cite{JinUniversalBarrier}, alternative constructions of such smooth ABLFs remains largely unexplored. To this end, we propose a unified ABLF capable of addressing both symmetric and asymmetric constraints. Furthermore, we show that the proposed \textit{logarithmic} ABLF is a smooth function and generalizes the symmetric BLF of \cite{Ngo_integratorBackstepping_1} to the asymmetric case.

%These functionals allow the original state constraints to be mixed with the error terms, unlike most of the SBLFs discussed previously which used functions solely composed of errors. 
%Furthermore, almost all of the aforementioned results address asymptotic convergence, with the exception of [22], where exponential convergence was established for nonlinear systems with symmetric state constraints. The work in [17] also briefly addressed local exponential stability for asymmetric output constraints, however this was a special case of the time-varying output constraint. However, the proposed control scheme to achieve fixed-time convergence is considerably complex. The authors in [24] propose a function that is very similar in structure to that of the ABLF in [23]. However, it is a continuously differentiable function, unlike the smooth ABLF in [23], and was introduced to achieve finite time convergence for p-normal system with output constraints. Based of the proposed ABLF, we further propose a control scheme to achieve exponential convergence for nonlinear SISO systems in strict feedback form with output constraints.
Almost all of the aforementioned results address asymptotic convergence, with a few exceptions. For instance, the authors in \cite{tee2012control} established exponential convergence for nonlinear systems with symmetric state constraints. Further, local exponential stability for asymmetric output constraints was addressed briefly in \cite{tee2011control} as a special case of the time varying output constraints, and not as a primary focus of their study. The authors in \cite{JinUniversalBarrier} proposed a control scheme to achieve fixed-time convergence for multi-input multi-output nonlinear systems with the proposed \textit{universal barrier function}. However, the proposed control scheme, to achieve fixed-time convergence, is considerably involved. In \cite{chen2020unifiedBarrier_finitetime} finite time convergence for $p$-normal systems with output constraints was addressed. The ABLF considered was based on the \textit{universal barrier function} in \cite{JinUniversalBarrier}. However, the considered function is continuously differentiable, unlike the smooth ABLF in \cite{JinUniversalBarrier}. Observing that majority of the results in existing literature focus on asymptotic convergence, with a few exceptions as discussed, and further motivated by the local exponential stability guarantees in \cite{tee2011control} for asymmetric constraints, we propose a locally exponentially stabilizing controller that achieves output tracking while ensuring that the output constraints are not violated. The main contributions of this article are as follows:
\begin{enumerate}
    \item A unified \textit{logarithmic} ABLF is proposed. We show that the unified ABLF is smooth, strictly convex, positive definite and asymmetric. Furthermore, we also show that the proposed function reduces to the standard symmetric logarithmic BLF under symmetric constraints, thereby generalizing the existing symmetric BLF.
    %\item For the proposed ABLF, we show the existence of three important functions that will facilitate the design of exponentially stabilizing control  
    \item Based on the proposed ABLF, we propose a control law for nonlinear single-input single-output (SISO) systems in strict feedback form with output constraints. The proposed control law is unified in the sense that a single control law can handle both symmetric and asymmetric output constraints, and overcomes the challenges that come with the use of the switching function in \cite{tee2009barrier}.
    \item We constructively show that the proposed function can be bounded from above and below by symmetric class $\mathcal{K}$ functions. Along with these bounds we show the existence of an alternate upper bounding function, which generalizes a logarithmic inequality in \cite{TeeRen_inequa_2010}. These constructions form an important tool in establishing local exponential stability.
    \item We present two numerical examples with difference in system order to illustrate the scalability of the proposed scheme. Furthermore, we also comment on the difference in performance of the proposed control scheme with the one existing in \cite{tee2009barrier}, with symmetric and asymmetric constraints.
\end{enumerate}
\section{Preliminaries and Problem Formulation}\label{Sec:Problem_Formulation}
\subsection{System description}\label{subSec:Prob_form_sys_description}
Consider a nonlinear system of order $n$ in the strict feedback form given below:   
\small
\begin{equation}\label{Eqn:sys_dynamics}
\begin{split}
    & \dot{x}_{i}=f_{i}(x_1,\dots,x_{i})+g_{i}(x_1,\dots,x_{i})x_{i+1}, ~1\leqslant i\leqslant n-1\\
    & \dot{x}_{n}=f_{n}(x_1,\dots,x_{n})+g_{n}(x_1,\dots,x_{n})u,\\
    & y=x_1,
\end{split}
\end{equation}
\normalsize
where the functions $f_1,\dots,f_n$ and $g_1,\dots,g_n$ are smooth (refer \textit{Definition} \ref{Def:smooth_function} given below), $x_1,\dots,x_n$ are the states, and $u$ and $y$ are the input and output, respectively. Let ${x}\coloneqq [x_1,x_2,\dots ,x_n]^T~\in \mathbb{R}^n$ denote the state vector and ${x}_{2:n}\coloneqq [x_2,\dots,x_n]\in \mathbb{R}^{n-1}$ denote all the states except $x_1$. The constraint on output is given by: $|y(t)|<k_y$ for all $t\geqslant 0$, where $k_y>0$.

Note: The set of nonnegative real numbers is denoted by $\mathbb{R}_+$ and $\|\cdot\|_2$ denotes the Euclidean 2-norm in $\mathbb{R}^n$.

The aim is to design a control law $u(t)$ to achieve output tracking of a given desired trajectory $y_d(t)$ such that the output constraint is not violated. We make the following assumptions on the desired trajectory $y_d(t)$ and the system described in \eqref{Eqn:sys_dynamics}. 
\begin{assmp}\label{Assumption:bounds_on_yd}
For any $k_y\!>\!0$, there exist positive constants $Y_{lb},Y_{ub},Y_1,Y_2,\dots,Y_n$ that satisfy the condition $\text{max}(Y_{lb},Y_{ub})\!<\!k_y$, such that $y_d(t)$ and its successive time derivatives satisfy $-Y_{lb}\!<\!y_d\!<\!Y_{ub}$, $|y^{(i)}_d(t)|<Y_i,\forall t\geqslant0, \forall i\in \{1,2,\dots,n\}$, where $y^{(i)}_d(t)$ is the $i^{\text{th}}$ time derivative of $y_d$. 
\end{assmp}
\begin{assmp}\label{Assumption:bounds_on_gi}
    The functions, $g_1,g_2,\dots,g_n$ are known and $\exists g_0\in \mathbb{R}_+$ such that $0\!<\!g_0\!<\!|g_{i}(x_1,\dots,x_i)|$ $\forall i\in \{1,\dots,n\}$, for $|x_1|\!<\!k_y$. Furthermore, without loss of generality we assume that $g_i(x_1,\dots,x_i)$ are all positive for $|x|<k_y$.
\end{assmp}
% \begin{assmp}\label{Assumption:LIP_uncertain_system}
%     The functions $f_i(x_1,x_2,\dots,x_i)$ are uncertain, but they satisfy the linear-in-the-parameters condition
%     \begin{align}\label{Eqn:LIP_condition}
%         f_i(x_1,x_2,\dots,x_i)=\theta^T\psi(x_1,x_2,\dots,x_i),
%     \end{align}
%     where $\psi_1,\psi_2,\dots,\psi_n$ are smooth functions and $\theta\in \mathbb{R}^l$ is the uncertain parameters vector which satisfies $\|\theta\|\!<\!\theta_M,~\theta_M\!>\!0$.
% \end{assmp}
We now state some relevant definitions from \cite{krantz1999geometry} and \cite{khalil}.
\begin{defn}\label{Def:smooth_function}
A function $f \colon \mathbb{R}^n\to \mathbb{R}$ is said to be continuously differentiable of order $k$, or $\mathcal{C}^k$, if $$D^af \coloneqq\frac{\partial^{a_1}}{\partial x_1^{a_1}}\frac{\partial^{a_2}}{\partial x_2^{a_2}}\cdots\frac{\partial^{a_n}}{\partial x_n^{a_n}}f$$ exists and is continuous, for all points $(x_1, x_2, . . . , x_n)$ in $\mathbb{R}^n,$ and all nonnegative integers $a_1, a_2,\ldots, a_n$, satisfying $\sum_{i=1}^n a_i \le k$. A smooth, or $\mathcal{C}^{\infty}$, function $f \colon \mathbb{R}^{n}\to \mathbb{R}$ is one that is $\mathcal{C}^k$ for every positive integer $k$.
\end{defn}
\begin{defn}\label{Def:class_K_inifity}
A continuous function $\alpha\colon[0,a)\longrightarrow [0,\infty)$ is said to belong to class $\mathcal{K}$ if it is strictly increasing and $\alpha(0)=0$. It is said to belong to class $\mathcal{K}_{\infty}$ if $a=\infty$ and $\alpha(r)\longrightarrow\infty$ as $r\longrightarrow \infty$.
\end{defn}
\subsection{Preliminary results}
\subsubsection{Infinite logarithmic series expansion\texorpdfstring{\cite{spivak2006calculus}}{\cite{spivak2006calculus}}}
Given $\hat x\in (-1,1]$, the series expansion for $\text{log}(1+\hat x)$ is given by:
\begin{align}\label{Eqn:series_expand_log}
    \text{log}(1+\hat x)=\sum_{k=1}^{\infty}(-1)^{k+1}\frac{x^k}{k},
\end{align}
where log$(\bullet)$ denotes the natural logarithm of $(\bullet)$. From the ratio test \cite[Theorem 3, Ch 23]{spivak2006calculus} it can be shown that the series is absolutely convergent for $|\hat x|<1$.
For the series expansion of $\text{log}(1-\hat{x})$, substitute $-\hat x$ in \eqref{Eqn:series_expand_log} with $\hat{x}\in [-1,1)$.
\subsubsection{Cardan's formula for cubic polynomials\cite{Uspensky1948TheoryOfEquations}}\label{subSec:cardans_formula}
Consider the general cubic equation given by
\begin{align}\label{Eqn:general_cubic_polynomail}
    f(x)=x^3+\tilde ax^2+\tilde bx+\tilde c=0.
\end{align}
Substituting $y=x-(\tilde a/3)$ the given general cubic polynomial can be transformed to a depressed cubic polynomial
\begin{align}\label{Eqn:depressed_cubic_polynomial}
    y^3+py+q=0,~~p=\tilde b\!-\!\frac{\tilde a^2}{3},~q=\tilde c\!-\!\frac{\tilde b \tilde a}{3}+\frac{2\tilde a^3}{27}.
\end{align}
The above cubic polynomial has the following roots:
\begin{align}
    y_1&=\!\sqrt[3]{A}\!+\!\sqrt[3]{B},~~y_2\!=\!\omega\sqrt[3]{A}\!+\!\omega\sqrt[3]{B},~~y_3\!=\!\omega^2\sqrt[3]{A}\!+\!\omega^2\sqrt[3]{B},\notag
\end{align}
where $\omega=(-1+i\sqrt{3})/2$ is an imaginary cube root of unity, and $A$ and $B$ are given by:
\begin{align}\label{Eqn:A_B_def_roots}
    A\coloneqq -\frac{q}{2}+\sqrt{\!\frac{q^2}{4}\!+\!\frac{p^3}{27}},~B\coloneqq -\frac{q}{2}-\sqrt{\!\frac{q^2}{4}\!+\!\frac{p^3}{27}}.
\end{align}
Since $p,q\in\mathbb{R}$, the nature of the roots depend on the $\Delta\!\coloneqq \!4p^3\!+\!27q^2$. This is clear from the definition of $A$ and $B$ in \eqref{Eqn:A_B_def_roots}. When $\Delta\!>\!0$, the cubic polynomial in \eqref{Eqn:depressed_cubic_polynomial} will have one real root and two complex conjugate roots. The case when $\Delta\!\leqslant \!0$ is not germane to the current work and will not be considered. However, interested readers may refer to Ch. 5 of \cite{Uspensky1948TheoryOfEquations} for a detailed description of the steps for obtaining the roots of \eqref{Eqn:depressed_cubic_polynomial}.
\subsubsection{Stability with general forms of barrier functions\cite{tee2009barrier}}\label{subSec:Barrier_results}
% \begin{defn}[\cite{tee2009barrier}]
% A barrier Lyapunov function is a scalar function $V(x)$, defined with respect to the system $\dot{x} = f(x)$ on an open region, $D$, containing the origin, that is continuous, positive definite, has continuous first-order partial derivatives at every point of $D$, has the property $V (x) \to\infty$ as $x$ approaches the boundary of $D$, and satisfies $V (x(t)) \leqslant b~~ \forall t \geqslant 0$ along the solution of $\dot{x} = f (x)$ for $x(0) \in D$ and some positive constant $b$.
% \end{defn}
The following result formalizes the stability guarantees for the design of control for the system in a strict feedback form \eqref{Eqn:sys_dynamics}, using general forms of barrier functions, such that the output or state constraints are not violated. 
\begin{lemma}\label{lemma:lemma_1_KPT}
For arbitrary positive constants $k_{a_1}, k_{b_1}$, define the following two open subsets of $\mathbb{R}^{l+1}$: $\mathcal{Z}_1 \coloneqq \left\{z_1 \in \mathbb{R}: -k_{a_1} <z_1 <k_{b_1}\right\}\subset \mathbb{R}$ and
$\mathcal{N} \coloneqq \mathbb{R}^l \times \mathcal{Z}_1$. Consider the system
$$\dot \eta= h(t, \eta),$$ where $\eta \coloneqq [w,~~z_1]^T\in \mathcal{N}$, and $h \colon \mathbb{R}_+ \times \mathcal{N} \to \mathbb{R}^{l+1}$ is piecewise continuous in $t$ and locally Lipschitz in $z_1$, uniformly in $t$, on $\mathbb{R}_+ \times \mathcal{N}$. Suppose that there exist functions $U \colon \mathbb{R}^l \to \mathbb{R}_+$ and $V_1 \colon \mathcal{Z}_1 \to \mathbb{R}_+$, continuously differentiable and positive definite in their respective domains, such that $V_1(z_1)\to\infty~\text{as}~ z_1\to -k_{a_1}~\text{or}~z_1\to k_{b_1},$ and $\gamma_1(||w||)\le U(w)\le \gamma_2(||w||),$ where $\gamma_1$ and $\gamma_2$ are class $\mathcal{K}_{\infty}$ functions. Let $V(\eta) \coloneqq V_1(z_1) + U(w)$, and let $z_1(0) \in (-k_{a_1}, k_{b_1})$. If the inequality
\begin{align}
   \dot V = \dfrac{\partial V}{\partial \eta}h\le 0
\end{align}
holds over $\mathcal{N}$, then $z_1(t)$ remains in the open set $z_1 \in (-k_{a_1}, k_{b_1})$ $\forall\,\,t \,\in [0,\infty)$.
\end{lemma}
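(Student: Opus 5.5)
We will argue by contradiction, combining local existence of solutions with the monotonicity of $V$ along trajectories; this is essentially the argument of \cite{tee2009barrier}. Since $h$ is piecewise continuous in $t$ and locally Lipschitz on the open set $\mathcal{N}$, local existence and uniqueness (see \cite{khalil}) give a unique solution $\eta(t)$ on a maximal interval $[0,\tau_{\max})$ with $\eta(t)\in\mathcal{N}$ there. The initial condition is admissible because $z_1(0)\in(-k_{a_1},k_{b_1})$. The proof then has two steps: first show that $z_1(t)$ stays in a compact subset of $\mathcal{Z}_1$ and $w(t)$ stays bounded on $[0,\tau_{\max})$, and then show that this forces $\tau_{\max}=\infty$.

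For the first step, $\dot V\le 0$ on $\mathcal{N}$ gives $V(\eta(t))\le V(\eta(0))$ for all $t\in[0,\tau_{\max})$. Since $U\ge 0$ and $V_1\ge 0$, each piece is bounded separately: $V_1(z_1(t))\le V(\eta(0))$ and $U(w(t))\le V(\eta(0))$. The lower bound $\gamma_1(\|w\|)\le U(w)$, with $\gamma_1\in\mathcal{K}_\infty$, then gives $\|w(t)\|\le\gamma_1^{-1}(V(\eta(0)))$. For $z_1$, we use the barrier property: since $V_1$ is continuous and $V_1(z_1)\to\infty$ at both endpoints, the sublevel set $\{z_1\in\mathcal{Z}_1: V_1(z_1)\le V(\eta(0))\}$ is closed in $\mathcal{Z}_1$ and stays away from $-k_{a_1}$ and $k_{b_1}$. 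Hence it lies in some $[-k_{a_1}+\epsilon,\,k_{b_1}-\epsilon]$ with $\epsilon>0$. So $\eta(t)$ remains in the compact set $K\coloneqq\{\|w\|\le\gamma_1^{-1}(V(\eta(0)))\}\times[-k_{a_1}+\epsilon,k_{b_1}-\epsilon]\subset\mathcal{N}$.

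The main obstacle is the second step, ruling out finite escape time. Suppose $\tau_{\max}<\infty$. By the standard extension theorem (e.g., Theorem 3.3 in \cite{khalil}), the solution must then leave every compact subset of $\mathcal{N}$ as $t\to\tau_{\max}$. This contradicts $\eta(t)\in K$ for all $t<\tau_{\max}$, so $\tau_{\max}=\infty$. Some care is needed to apply the extension theorem under the stated hypotheses, namely only piecewise continuity in $t$ and a local Lipschitz condition uniform in $t$. I would read the Lipschitz assumption as holding in the full state $\eta$, as \cite{tee2009barrier} implicitly does, which supplies a uniform Lipschitz constant on $K$.

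Combining the two steps, the solution exists on $[0,\infty)$ and satisfies $z_1(t)\in[-k_{a_1}+\epsilon,k_{b_1}-\epsilon]\subset(-k_{a_1},k_{b_1})$ for all $t\ge0$, which is the claim.
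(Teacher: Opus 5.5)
The paper gives no proof of this lemma; it quotes it from \cite{tee2009barrier}. Your argument is essentially the standard proof from that source, and it is correct: $\dot V\le 0$ confines $\eta(t)$ on the maximal interval to the compact set $\{\|w\|\le\gamma_1^{-1}(V(\eta(0)))\}\times[-k_{a_1}+\epsilon,k_{b_1}-\epsilon]\subset\mathcal{N}$, and the extension theorem then forces $\tau_{\max}=\infty$. You are also right to read the Lipschitz hypothesis as holding in the full state $\eta$ rather than only in $z_1$ as literally stated, since existence, uniqueness and the continuation step all need that.
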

\subsubsection{Exponential stability\cite{blanchini2008set}}
Consider the system 
\begin{align}\label{Eqn:sys_exponential_stability_prelim}
    \dot\xi(t)=f(t,\xi), \xi\in \mathbb{R}^n,
\end{align}
where $f\colon [0,\infty)\times D\longrightarrow \mathbb{R}^n$ is piecewise continuous in $t$ and locally Lipschitz in $\xi$ on $[0,\infty)\times D$, and $D\subset \mathbb{R}^n$ is a domain containing $\xi=0$. 
\begin{theorem}\label{theorem:exponetial_stability}
Let $\xi=0$ be the equilibrium point for \eqref{Eqn:sys_exponential_stability_prelim}. Assume that the given system admits a continuously differentiable, positive definite function $\Psi\colon [0,\infty)\times D\to \mathbb{R}$, that satisfies the following:
\begin{align}\label{Eqn:polynomial_bound_Psi}
    k_1\|\xi\|^p\leqslant \Psi(t,\xi)\leqslant k_2\|\xi\|^p,
\end{align}
$\forall t\geqslant 0$ and $\forall \xi\in D$ where $k_1,k_2\in \mathbb{R}_+$ and $p$ is a positive integer. Further, if $\Psi(t,\xi)$ satisfies the
\begin{align}\label{Eqn:dot_Psi}
    \dot\Psi(t,\xi)\leqslant -\tilde k\Psi(t,\xi),\forall t\geqslant 0, \forall \xi\in D,
\end{align}
for some positive $\tilde k$, then the origin of system \eqref{Eqn:sys_exponential_stability_prelim} is exponentially stable. If \eqref{Eqn:polynomial_bound_Psi} and \eqref{Eqn:dot_Psi} hold for all $\xi\in \mathbb{R}^n$, then the origin is globally exponentially stable. 
\end{theorem}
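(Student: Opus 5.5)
The plan is a comparison argument on the scalar function $v(t)\coloneqq\Psi(t,\xi(t))$ evaluated along solutions, followed by conversion back to $\|\xi(t)\|$ using the sandwich bound \eqref{Eqn:polynomial_bound_Psi}.

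First fix a solution $\xi(t)$ of \eqref{Eqn:sys_exponential_stability_prelim} with $\xi(0)\in D$. Local existence and uniqueness hold because $f$ is piecewise continuous in $t$ and locally Lipschitz in $\xi$. While $\xi(t)$ stays in $D$, $v$ is absolutely continuous, since $\Psi$ is $\mathcal{C}^1$. By \eqref{Eqn:dot_Psi},
\begin{align}
\dot v(t)\leqslant -\tilde k\, v(t).
\end{align}
Multiplying by the integrating factor $e^{\tilde k t}$ shows that $\frac{d}{dt}\big(e^{\tilde k t}v(t)\big)\leqslant 0$. Hence $v(t)\leqslant v(0)e^{-\tilde k t}$ on the interval of existence in $D$; this is the comparison lemma in its simplest form.

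Next, combine this with \eqref{Eqn:polynomial_bound_Psi}:
\begin{align}
k_1\|\xi(t)\|^p\leqslant v(t)\leqslant v(0)e^{-\tilde k t}\leqslant k_2\|\xi(0)\|^p e^{-\tilde k t}.
\end{align}
Taking $p$-th roots gives
\begin{align}
\|\xi(t)\|\leqslant \left(\frac{k_2}{k_1}\right)^{1/p}\|\xi(0)\|\,e^{-(\tilde k/p)t},
\end{align}
which is exactly the exponential stability estimate, with rate $\tilde k/p$ and overshoot constant $(k_2/k_1)^{1/p}$.

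The main obstacle is forward invariance: the trajectory must not leave $D$, so that the inequalities remain valid for all $t\geqslant 0$ and the solution is defined on $[0,\infty)$. I would handle this in the standard way.
\begin{itemize}
\item Choose $r>0$ with the closed ball $B_r\subset D$.
\item Restrict initial conditions to $\|\xi(0)\|<r\,(k_1/k_2)^{1/p}$.
\item The estimate above then keeps $\|\xi(t)\|<r$ as long as the solution exists in $B_r$. A continuity (first-exit-time) argument rules out reaching $\partial B_r$.
\item Boundedness within the compact set $B_r$ then excludes finite escape time, so the solution extends to all $t\geqslant0$.
\end{itemize}

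For the global claim, take $D=\mathbb{R}^n$. No restriction on $\xi(0)$ is then needed, and the same bound holds for every initial condition. This gives global exponential stability.
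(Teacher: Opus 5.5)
Your proof is correct and is the standard argument: integrate $\dot\Psi\leqslant-\tilde k\Psi$ via the comparison lemma, sandwich with \eqref{Eqn:polynomial_bound_Psi}, and take $p$-th roots. The paper gives no proof of this statement (it is cited from \cite{blanchini2008set}), but it carries out exactly this computation with $p=2$ in the proof of Theorem~\ref{theorem:local_exponential_stability}. Your first-exit-time argument for forward invariance is a step the paper leaves implicit. Since the hypotheses are uniform in $t$, your reasoning also gives $\|\xi(t)\|\leqslant (k_2/k_1)^{1/p}\|\xi(t_0)\|e^{-\tilde k(t-t_0)/p}$ for every initial time $t_0\geqslant 0$, which is what exponential stability of a time-varying system formally requires.
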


\section{Unified barrier Lyapunov function}\label{Sec:Unified barrier Lyapunov function}
In this section, we propose a smooth \textit{logarithmic} asymmetric barrier Lyapunov function and construct class $\mathcal{K}_{\infty}$ functions that serve as upper bound and lower bound of the proposed function in its domain. 

Consider the open set $\mathcal{X}:=\{x\in \mathbb{R}: -a<x<b,~a,b\in \mathbb{R}_+ \}$ and define the function $V$ over the set $\mathcal{X}$, $V\colon \mathcal{X}\to \mathbb{R}_+\cup \{0\}$, which is expressed as
\begin{align}
V(x)\coloneqq \text{log}\!\left [\dfrac{a^a b^b}{(a+x)^a(b-x)^b}\right].\label{Eqn:proposed_ABLF}    
\end{align} 
It may be immediately seen that $V\rightarrow \infty$ as  $x\rightarrow -a$ or $x\rightarrow b$. Furthermore, it can also be concluded that for $a\ne b$, the function is asymmetric about $x=0$ since $V(-x)\ne V(x)$.  
\begin{remark}\label{Rem:irrational_powers}
    Since $a,b\in \mathbb{R}$, the proposed function $V(x)$ requires real-exponentiation. As $\mathbb{Q}$ (set of rational numbers) is dense in $\mathbb{R}$ \cite[Theorem 1.20]{rudin1976principles}, we can construct $a_q, b_q\in \mathbb{Q}$ arbitrarily close to $a$ and $b$ and satisfying $a_q<a$ and $b_q<b$. Thus, the function $V(x)$ is well defined even if $a$ and/or $b$ are irrational.
\end{remark}
\begin{lemma}\label{Lemma:pos_defi_C_inf_ABLF}
    The following hold for the function $V(x)$: 
    \begin{itemize}
        \item[a.] $V(x)$ is a class $\mathcal{C}^{\infty}$ function.
        \item[b.] $V(x)$ is strictly convex and positive definite over $\mathcal{X}$.
    \end{itemize}
\end{lemma}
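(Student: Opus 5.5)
The plan is to rewrite $V$ as a sum of elementary logarithms and read everything off from it. On $\mathcal{X}$ we have $a+x>0$ and $b-x>0$, so
\begin{align*}
V(x)=a\log a+b\log b-a\log(a+x)-b\log(b-x).
\end{align*}
Since $a,b>0$, real powers of positive numbers are defined by $t^{c}=e^{c\log t}$. This also settles the concern raised in Remark~\ref{Rem:irrational_powers}: the expansion is legitimate for arbitrary $a,b\in\mathbb{R}_+$.

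\textbf{Part a.} The maps $x\mapsto \log(a+x)$ and $x\mapsto\log(b-x)$ are compositions of affine maps with $\log$, which is $\mathcal{C}^\infty$ on $(0,\infty)$. Hence $V$ is $\mathcal{C}^\infty$ on $\mathcal{X}$ in the sense of Definition~\ref{Def:smooth_function}. Explicitly, the derivatives are
\begin{align*}
V^{(k)}(x)=(-1)^{k}(k-1)!\,\frac{a}{(a+x)^k}+(k-1)!\,\frac{b}{(b-x)^k},\quad k\geqslant1,
\end{align*}
and each of these is continuous on $\mathcal{X}$.

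\textbf{Part b, convexity.} From the formula above,
\begin{align*}
V'(x)=-\frac{a}{a+x}+\frac{b}{b-x}=\frac{(a+b)x}{(a+x)(b-x)},\qquad V''(x)=\frac{a}{(a+x)^2}+\frac{b}{(b-x)^2}.
\end{align*}
Since $a,b>0$, we get $V''>0$ on $\mathcal{X}$, so $V$ is strictly convex.

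\textbf{Part b, positive definiteness.} First, $V(0)=\log\frac{a^ab^b}{a^ab^b}=0$. Next, the numerator of $V'$ is $(a+b)x$, so $V'(x)=0$ only at $x=0$. The sign of $V'$ equals the sign of $x$, because the denominator $(a+x)(b-x)$ is positive on $\mathcal{X}$. Therefore $V$ is strictly decreasing on $(-a,0]$ and strictly increasing on $[0,b)$, which gives $V(x)>0$ for $x\neq0$. Equivalently, strict convexity together with $V'(0)=0$ makes $0$ the unique global minimizer. This also confirms that $V$ maps into $\mathbb{R}_+\cup\{0\}$.

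The argument has no real obstacle. The only point needing care is justifying the logarithm splitting with real exponents, which holds because all bases are positive on $\mathcal{X}$.
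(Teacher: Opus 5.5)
Your proposal is correct and follows essentially the same route as the paper: you split $V$ into the two logarithmic terms to get smoothness, show $V''>0$ for strict convexity (your form $a/(a+x)^2+b/(b-x)^2$ equals the paper's $(a+b)(x^2+ab)/[(a+x)(b-x)]^2$), and use $V(0)=0$ together with the unique stationary point $x=0$ for positive definiteness. The differences are minor. Your sign-of-$V'$ monotonicity argument is a slightly more direct substitute for the paper's appeal to uniqueness of the minimizer of a strictly convex function, and defining real powers as $t^{c}=e^{c\log t}$ handles the irrational-exponent issue more cleanly than Remark~\ref{Rem:irrational_powers}.
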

\begin{proof}
\begin{itemize}
    \item[a.] The function $V(x)$, as given in \eqref{Eqn:proposed_ABLF}, can be rewritten as given below
    \small
    \begin{align}\label{Eqn:Va_plus_Vb}
    V(x)\!=V_a(x)\!+\!V_b(x)\!\coloneqq \!a\text{log}\!\left(\!\frac{a}{a\!+\!x}\! \right)\!+\!b\text{log}\!\left(\!\frac{b}{b\!-\!x} \!\right).
    \end{align}
    \normalsize
    The functions $V_a$  and $V_b$ are logarithms of rational polynomial functions and hence are $\mathcal{C}^{\infty}$ over $(-a,\infty)$ and $(-\infty,b)$ respectively. Therefore, $V(x)$ is $\mathcal{C}^{\infty}$ over $(-a,b)$ %as it is a sum of two $\mathcal{C}^{\infty}$ functions.
    \item[b.] The first and second derivative of $V(x)$ with respect to $x$ are:
    \begin{align}
        V^{\prime}(x)=\frac{\mathrm{d}V(x)}{\mathrm{d}x}&=\!\frac{(a\!+\!b)x}{(a\!+\!x)(b\!-\!x)},\label{Eqn:dV_dx}\\
        V^{\prime \prime}(x)=\frac{\mathrm{d}^2V(x)}{\mathrm{d}x^2}&=\!\frac{(a\!+\!b)(x^2 \!+\!ab)}{[(a\!+\!x)(b\!-\!x)]^2}.\label{Eqn:ddV_dxx}
    \end{align}
    Clearly $V^{\prime \prime}(x)>0,~\forall x\in \mathcal{X}$. Therefore, from the second-order convexity conditions \cite{boyd2009convex}, $V(x)$ is strictly convex. Furthermore, since $V(x)$ is defined over the convex set $\mathcal{X}$, which is open, $V(x)$ has at most one global minimum over $\mathcal{X}$ satisfying $V^{\prime}(x)=0$ \cite[Proposition 1.1.2]{bertsekas1999NonlinearProg}. Therefore, the function $V(x)$ has a global minimum at $x=0$. Further, we have $V(x)>V(0)~\forall x\in \mathcal{X}\backslash{\{0\}}$. Since $V(0)=0$, we have $V(x)>0~\forall x\in \mathcal{X}\backslash{\{0\}}$.% indicating that $V(x)$ is positive definite over $\mathcal{X}$.
    \vspace{-1.5mm}
\end{itemize}
\end{proof}

\begin{remark}\label{Rem:Choice_a_b}
Suppose the function $V(x)$, instead of being defined as in \eqref{Eqn:proposed_ABLF}, is alternatively defined as follows:
\begin{align}
    \widetilde V(x)\coloneqq \text{log}\!\left [\dfrac{a^{a_1} b^{b_1}}{(a+x)^{a_1}(b-x)^{b_1}}\right],\label{Eqn:alternate_proposed_ABLF}
\end{align}
where $a_1,b_1\in \mathbb{R}_+$ with $a_1\ne a$ and $b_1\ne b$. Taking the first derivative of $\widetilde V(x)$, as given in \eqref{Eqn:alternate_proposed_ABLF}, with respect to $x$ we have
\begin{align}
    \widetilde{V}^{\prime}(x)=\!\frac{(a_1\!+\!b_1)x+(b_1a-a_1b)}{(a\!+\!x)(b\!-\!x)}.\label{Eqn:alternate_dV_dx}
\end{align}
From \eqref{Eqn:alternate_dV_dx}, it is evident that the choice of $a_1$ and $b_1$ is crucial in determining the extremum of $V(x)$. It can be easily shown that $\widetilde V(x)$, as given in \eqref{Eqn:alternate_proposed_ABLF}, has an extremum at $x=0$ iff $a_1/b_1=a/b$. Furthermore, it can be easily verified that \textit{Lemma} \ref{Lemma:pos_defi_C_inf_ABLF} holds for arbitrary $a_1,b_1>0$ satisfying $a_1/b_1=a/b$.
% As a consequence of $V(x)$ being positive definite, the following relation holds:
% \begin{align}\label{Eqn:zeta_positive_def}
%     \zeta(x)\coloneqq \frac{a^a b^b}{(a+x)^a(b-x)^b}\geqslant 1,\forall~x\in \mathcal{X}.
% \end{align}
% Therefore, $\zeta(x)>0~\forall\ x\in\mathcal{X}$.

\end{remark} 
\begin{remark}\label{Rem:symm_constraints}
    Let $a=b$ in \eqref{Eqn:proposed_ABLF}, then we have
    \begin{align}\label{Eqn:a_equal_b_Sym_ABLF}
        V_{\text{sym}}(x)=\text{log}\left(\frac{a^a a^a}{(a^2-x^2)^a}  \right)=a ~\text{log}\left(\frac{a^2}{a^2-x^2}  \right).
    \end{align}
     The proposed function thus reduces to the symmetric barrier Lyapunov function in \cite{tee2009barrier}. Hence, the proposed function is a generalization of the existing logarithmic symmetric barrier Lyapunov function to the asymmetric case.
\end{remark}
\begin{remark}
The commonly used asymmetric barrier Lyapunov function, proposed in \cite{tee2009barrier}, is given by
\small
\begin{align}\label{Eqn:ABLF_KPT}
    V_{\text{exist}}(x)\!=\!\frac{1}{p}\left[ q(x)~\text{log}\frac{a^p}{a^p-x^p}\! +\!(1\!-\!q(x))~\text{log}\frac{b^p}{b^p-x^p}\right],
\end{align}
\normalsize
where $p$ is an even integer satisfying $p\geqslant 2$ and $q(x)$ is the switching function defined as
\begin{align}
    q(x)=\begin{dcases}
        1, &\text{if }x>0\\
        0, &\text{otherwise.}
    \end{dcases}\notag
\end{align}
Clearly, $V_{\text{exist}}(x)$ does not seamlessly reduce to the symmetric case from substituting $a=b$ unless $p=2$. Furthermore, the function, $V_{\text{exist}}(x)$, is $(p-1)$ times differentiable, which is due to the discontinuous function $q(x)$. Therefore, the higher the number of successive derivatives required in the controller design, the higher will be the required value of $p$. This is reflected in the control design in \cite{tee2009barrier}, where $p$ must satisfy $p\geqslant n$ with $n$ being the order of the system in \eqref{Eqn:sys_dynamics}.
\end{remark}
Following the results in \textit{Lemma} \ref{Lemma:pos_defi_C_inf_ABLF} we have the following conclusions on $V^{\prime \prime}(x)$.
\begin{corollary}\label{corollary:minima_of_ddot_V}
The second-order derivative of $V(x)$ with respect to $x$, $ V^{\prime \prime}(x)$, as given in \eqref{Eqn:ddV_dxx}, is strictly convex over $\mathcal{X}$ and admits a global minimum at $x=\sqrt[3]{b^2a}-\sqrt[3]{ba^2}\in \mathcal{X}$. Furthermore, the following statements hold for $V^{\prime \prime}(x)$: 
\begin{align}
    &\text{If}~~a<b~~\text{then}~~\left(\frac{b^2}{a}\right)V^{\prime \prime}(x)>1, \forall x\in \mathcal{X}.\label{Eqn:inequality_ddV_dxx_a<b}\\
    &\text{If}~~a>b~~\text{then}~~\left(\frac{a^2}{b}\right)V^{\prime \prime}(x)>1, \forall x\in \mathcal{X}.\label{Eqn:inequality_ddV_dxx_a>b}
\end{align}
\end{corollary}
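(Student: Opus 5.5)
The plan is to work with the partial-fraction form of $V^{\prime}$ rather than the factored forms \eqref{Eqn:dV_dx}--\eqref{Eqn:ddV_dxx}. Differentiating \eqref{Eqn:Va_plus_Vb} term by term gives
\begin{align*}
V^{\prime}(x)=\frac{b}{b-x}-\frac{a}{a+x},\qquad V^{\prime\prime}(x)=\frac{b}{(b-x)^2}+\frac{a}{(a+x)^2}.
\end{align*}
Combining the two fractions recovers \eqref{Eqn:ddV_dxx}. Differentiating twice more gives
\begin{align*}
V^{\prime\prime\prime}(x)=\frac{2b}{(b-x)^3}-\frac{2a}{(a+x)^3},\qquad V^{\prime\prime\prime\prime}(x)=\frac{6b}{(b-x)^4}+\frac{6a}{(a+x)^4}>0
\end{align*}
on $\mathcal{X}$. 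Hence $V^{\prime\prime}$ is strictly convex by the second-order condition \cite{boyd2009convex}. Exactly as in the proof of \textit{Lemma} \ref{Lemma:pos_defi_C_inf_ABLF}(b), \cite[Proposition 1.1.2]{bertsekas1999NonlinearProg} then shows that any stationary point of $V^{\prime\prime}$ in $\mathcal{X}$ is its unique global minimizer.

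Next I locate that stationary point. Setting $V^{\prime\prime\prime}(x)=0$ gives $\big((b-x)/(a+x)\big)^3=b/a$. Both $b-x$ and $a+x$ are positive on $\mathcal{X}$, so the real cube root can be taken, giving $(b-x)/(a+x)=r\coloneqq(b/a)^{1/3}$ and hence $x^\ast=(b-ra)/(1+r)$. Multiplying numerator and denominator by $a^{1/3}$ and factoring $b^{2/3}-a^{2/3}$ simplifies this to
\begin{align*}
x^\ast=a^{1/3}b^{1/3}\big(b^{1/3}-a^{1/3}\big)=\sqrt[3]{b^2a}-\sqrt[3]{ba^2}.
\end{align*}
Membership $x^\ast\in\mathcal{X}$ follows from the identities
\begin{align*}
a+x^\ast=a^{1/3}S,\qquad b-x^\ast=b^{1/3}S,\qquad S\coloneqq a^{2/3}-a^{1/3}b^{1/3}+b^{2/3}=\frac{a+b}{a^{1/3}+b^{1/3}}>0.
\end{align*}
These identities also give the minimum value in closed form:
\begin{align*}
V^{\prime\prime}(x^\ast)=\frac{a^{1/3}+b^{1/3}}{S^2}=\frac{(a^{1/3}+b^{1/3})^3}{(a+b)^2}.
\end{align*}

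For \eqref{Eqn:inequality_ddV_dxx_a<b}, since $V^{\prime\prime}(x)\geqslant V^{\prime\prime}(x^\ast)$, it suffices to show $b^2(a^{1/3}+b^{1/3})^3>a(a+b)^2$. I expect this to be the only delicate step, because crude bounds such as $(a^{1/3}+b^{1/3})^3>a+b$ fail when $b$ is close to $a$. Instead I normalize with $t=(b/a)^{1/3}>1$, which makes both sides homogeneous of degree three. The inequality becomes $t^6(1+t)^3>(1+t^3)^2$. Using $1+t^3=(1+t)(1-t+t^2)$, this reduces to $t^6(1+t)>(1-t+t^2)^2$. For $t>1$ we have $1-t+t^2<t^2$, so the right side is below $t^4$, while the left side exceeds $t^6\geqslant t^4$; this gives the strict inequality.

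For \eqref{Eqn:inequality_ddV_dxx_a>b}, I use the substitution $x\mapsto -x$, which swaps the roles of $a$ and $b$ in the expression for $V^{\prime\prime}$ and maps $\mathcal{X}$ onto $(-b,a)$. The case $a>b$ then follows from the case already proved with $a$ and $b$ interchanged.
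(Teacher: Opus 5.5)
Your proof is correct. It has the same skeleton as the paper's: positivity of $V^{(4)}$ gives strict convexity of $V''$, the unique stationary point is the global minimizer, and everything reduces to a one-variable inequality at that point. The individual steps are done differently, though.

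The paper clears denominators and solves $x^3+3abx+ab(a-b)=0$ with Cardan's formula, computing the discriminant and $A=ab^2$, $B=-a^2b$. You instead use the partial-fraction form of $V'''$ and take the real cube root of the positive ratio $(b-x)/(a+x)$. This is more elementary and gives uniqueness of the root in $\mathcal{X}$ directly. Through $a+x^\ast=a^{1/3}S$ and $b-x^\ast=b^{1/3}S$ it also verifies $x^\ast\in\mathcal{X}$, which the paper only asserts. Working with partial fractions also avoids slips in the paper's displayed formulas: its $V'''$ should read $2(a+b)\bigl(x^3+3abx+ab(a-b)\bigr)/\bigl((a+x)^3(b-x)^3\bigr)$, and its normalized minimum carries a stray factor $2$.

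For the inequality, the paper sets $\tilde z=b/a$, assumes the contrary, and takes cube roots to reach $\tilde z^{2/3}\bigl((1+\tilde z)^{1/3}-1\bigr)+(\tilde z^{1/3}-1)\leqslant 0$. Your closed form $V''(x^\ast)=(a^{1/3}+b^{1/3})^3/(a+b)^2$ with $t=(b/a)^{1/3}$ leads to the very same inequality: put $\tilde z=t^3$ in $\tilde z^2(1+\tilde z)>(\tilde z^{2/3}-\tilde z^{1/3}+1)^3$ and divide by $t^2-t+1$. You then settle it directly via $1-t+t^2<t^2$. Squaring that step uses $1-t+t^2>0$, which is immediate since it equals $S/a^{2/3}$; state it explicitly.

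Finally, your reflection $x\mapsto -x$ gives an explicit argument for the case $a>b$, where the paper only says it ``follows exactly as above with $\tilde z=a/b$''.
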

\begin{proof}
Differentiating $V^{\prime \prime}(x)$ with respect to $x$, we have
\begin{align}
    V^{\prime \prime \prime}(x)&=(a+b)\frac{x^3+3ab+ab(a-b)}{(a+x)^3 (b-x)^3}\label{Eqn:third_derivative_V(x)},\\
    V^{(4)}(x)&=(a+b)\frac{(x^2+ab)^2+ab(2x+a-b)^2}{(a+x)^4 (b-x)^4}\label{Eqn:fourth_derivative_V(x)}.
\end{align}
From \eqref{Eqn:fourth_derivative_V(x)}, we see that $V^{(4)}(x)$ is positive definite and hence $V^{\prime \prime}(x)$ is strictly convex over $\mathcal{X}$. Therefore, there exists at most one global minimum of $V^{\prime \prime}(x)$ over $\mathcal{X}$ \cite[Proposition 1.1.2]{bertsekas1999NonlinearProg}. To find the minimum, we equate \eqref{Eqn:third_derivative_V(x)} to zero, from which we have
\begin{align}\label{Eqn:minima_for_ddV_dxx}
    x^3+3abx+ab(a-b)=0.
\end{align}
Hence, the minimum must be a root of the cubic polynomial \eqref{Eqn:minima_for_ddV_dxx}. We shall now apply Cardan's formula to find the roots. Comparing \eqref{Eqn:minima_for_ddV_dxx} with cubic polynomial \eqref{Eqn:depressed_cubic_polynomial}, we have $p=3ab$ and $q=ab(a-b)$ and $p,q\in \mathbb{R}$. Furthermore, we have $\Delta=4(3ab)^3+27a^2b^2(a-b)^2>0$ from which we conclude that \eqref{Eqn:minima_for_ddV_dxx} will admit one real root and two complex conjugate roots. Computing $A$ and $B$ from the equations in \eqref{Eqn:A_B_def_roots}, we have $A=ab^2$ and $B=-a^2b$. Hence, the real root for the polynomial \eqref{Eqn:minima_for_ddV_dxx} is given by $x=\sqrt[3]{ab^2}-\sqrt[3]{a^2b}$, which is the global minimum of $V^{\prime \prime}(x)$ over $\mathcal{X}$.

To prove the inequality in \eqref{Eqn:inequality_ddV_dxx_a<b}, it is enough to show that $(b^2/a)\underset{x\in \mathcal{X}}{\min} (V^{\prime \prime}(x))>1$. The minimum of $V^{\prime \prime}(x)$ occurs at $x=\sqrt[3]{ab^2}-\sqrt[3]{a^2b}$. Substituting for $x$ in \eqref{Eqn:ddV_dxx}, we have
\small
\begin{align}
    \underset{x\in \mathcal{X}}{\min} V^{\prime \prime}(x)\!=\!V^{\prime \prime}(\!\sqrt[3]{\!ab^2}\!-\!\sqrt[3]{\!a^2b})\!=\!\frac{(a\!+\!b)}{(b^{2/3}\!-\!(ab)^{1/3}\!+\!a^{2/3})^3}.
\end{align}
\normalsize
We normalize the above expression by considering $\tilde z\coloneqq b/a$. It is easy to see that $\tilde z>1$ as $b>a$. The normalized expression for $(b^2/a)\min(V^{\prime \prime}(x))$ is given by
\begin{align}\label{Eqn:normalized_min_ddV_dxx}
    (b^2/a)\min(V^{\prime \prime}(x))=\frac{2\tilde z^2(1+z)}{(\tilde z^{2/3}-\tilde z^{1/3}+1)^3}.
\end{align}
Suppose $(b^2/a)\min(V^{\prime \prime}(x))$ is not greater than 1, then from \eqref{Eqn:normalized_min_ddV_dxx} we have
\begin{align}
   \frac{\tilde z^2(1+\tilde z)}{(\tilde z^{2/3}-\tilde z^{1/3}+1)^3}\leqslant 1.
\end{align}
On performing further algebraic manipulations, we arrive at the following expression:
\begin{align}\label{Eqn:contradicting_inequality}
    \tilde z^{2/3}\Bigl((1+\tilde z)^{1/3}-1\Bigr)+(\tilde z^{1/3}-1)\leqslant0
\end{align}
This is a contradiction as $\Bigl((1+\tilde z)^{1/3}-1\Bigr)>0$ and $(\tilde z^{1/3}-1)>0$ since $\tilde z>1$. Therefore, we have $(b^2/a)\min(V^{\prime \prime}(x))>1$. Hence, the inequality in \eqref{Eqn:inequality_ddV_dxx_a<b} must hold. The proof for the case $a>b$ follows exactly as above with $\tilde z\coloneqq a/b$.
\end{proof}
As a consequence of the above result we have the following result for the case $a=b$.
\begin{corollary}\label{corollary:inequality_ddV_dxx_a_equal_b}
 When $a=b$, the following inequality holds:
\begin{align}\label{Eqn:inequality_ddV_dxx_a_equal_b}
    aV^{\prime \prime}(x)>1,\forall x\in \mathcal{X}
\end{align}
\end{corollary}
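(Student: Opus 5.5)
When $a=b$ the expression \eqref{Eqn:ddV_dxx} simplifies to
\begin{align*}
V^{\prime\prime}(x)=\frac{2a(x^2+a^2)}{(a^2-x^2)^2}.
\end{align*}
The plan is to show directly that $aV^{\prime\prime}(x)>1$ on $\mathcal{X}=(-a,a)$. I would bound $aV^{\prime\prime}$ below by its minimum over $\mathcal{X}$, using Corollary \ref{corollary:minima_of_ddot_V} to locate that minimum.

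By Corollary \ref{corollary:minima_of_ddot_V}, $V^{\prime\prime}$ is strictly convex on $\mathcal{X}$ and attains its global minimum at $x=\sqrt[3]{b^2a}-\sqrt[3]{ba^2}$. With $a=b$ this point is $x=\sqrt[3]{a^3}-\sqrt[3]{a^3}=0$. A direct check agrees: \eqref{Eqn:third_derivative_V(x)} reduces to $V^{\prime\prime\prime}(x)=2a\,x(x^2+3a^2)/(a^2-x^2)^3$, which vanishes only at $x=0$ in $\mathcal{X}$. Evaluating the simplified formula there gives
\begin{align*}
V^{\prime\prime}(0)=\frac{2a\cdot a^2}{a^4}=\frac{2}{a}.
\end{align*}
Hence $aV^{\prime\prime}(x)\geqslant aV^{\prime\prime}(0)=2>1$ for all $x\in\mathcal{X}$, which is the claimed inequality \eqref{Eqn:inequality_ddV_dxx_a_equal_b}, in fact with margin.

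As a sanity check one can avoid the corollary entirely. Since $x^2+a^2\geqslant a^2$ and $0<(a^2-x^2)^2\leqslant a^4$ on $\mathcal{X}$, the simplified formula gives $aV^{\prime\prime}(x)\geqslant 2a^4/a^4=2$ directly. I would also note that the statement is the formal limit $\tilde z\to 1$ of \eqref{Eqn:normalized_min_ddV_dxx}, which yields the value $4/1>1$, but the direct evaluation is cleaner.

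I do not expect any real obstacle. The only care needed is confirming that the minimizer given by Corollary \ref{corollary:minima_of_ddot_V} specializes correctly to $x=0$ when $a=b$; the explicit derivative check above settles this.
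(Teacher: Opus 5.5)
Your proof is correct. The paper's proof is a one-liner. It observes that the contradiction inequality \eqref{Eqn:contradicting_inequality} from Corollary~\ref{corollary:minima_of_ddot_V} still fails at $\tilde z=1$: it would read $2^{1/3}-1\leqslant 0$, since the first term stays positive even though the second vanishes. So the paper reuses the general evaluation of $V''$ at the Cardan root and simply extends it to the boundary ratio. Your main line is the same idea in substance, namely that the minimum of $V''$ sits at $\sqrt[3]{ab^2}-\sqrt[3]{a^2b}=0$. You evaluate it there directly instead of passing through the normalized expression.

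Your ``sanity check'' is a genuinely different argument, and it is the cleanest proof of the statement. The bounds $x^2+a^2\geqslant a^2$ and $0<(a^2-x^2)^2\leqslant a^4$ give $aV''(x)\geqslant 2$ immediately. This needs no convexity, no minimizer, and no Cardan formula, and it exhibits the sharp constant. What the paper's route buys is uniformity: a single inequality covers every ratio $\tilde z\geqslant 1$.

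Two harmless slips:
\begin{itemize}
\item With $a=b$ the third derivative is actually $V'''(x)=4a\,x(x^2+3a^2)/(a^2-x^2)^3$. The displayed \eqref{Eqn:third_derivative_V(x)} drops a factor $2$ and writes $3ab$ where it should read $3abx$. Only the zero set matters, so your conclusion is unaffected.
\item The limit $\tilde z\to 1$ of the normalized minimum is $2$, not $4$. Normalizing $(b^2/a)(a+b)/(b^{2/3}-(ab)^{1/3}+a^{2/3})^3$ gives $\tilde z^2(1+\tilde z)/(\tilde z^{2/3}-\tilde z^{1/3}+1)^3$, so the factor $2$ in \eqref{Eqn:normalized_min_ddV_dxx} is spurious. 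Your own value $aV''(0)=2$ already shows this, and that inconsistency should have flagged the side remark.
\end{itemize}
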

\begin{proof}
It is straightforward to see that \eqref{Eqn:contradicting_inequality} cannot hold even when $\tilde z=1$, i.e., when $a=b$. Hence \eqref{Eqn:inequality_ddV_dxx_a_equal_b} holds.
\end{proof}

We shall now state an important result pertaining to the proposed BLF, $V(x)$. This result will play a vital role in establishing the existence of functions that serve to bound the proposed function $V(x)$.
\begin{lemma}\label{lem:V(-x)_greater_than_V(x)}
    For the function $V(x)$ with $0<a<b$, as defined in \eqref{Eqn:proposed_ABLF}, the following holds:
    \begin{align}
        V(-x)>V(x)~\forall x\in(0,a)\subseteq \mathcal{X}.
    \end{align}
\end{lemma}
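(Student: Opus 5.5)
Define $W(x)\coloneqq V(-x)-V(x)$ on $[0,a)$; the aim is $W(x)>0$ for $x\in(0,a)$. Both $x$ and $-x$ lie in $\mathcal{X}$ when $x\in[0,a)$, since $a<b$. We have $W(0)=0$, so it suffices to show $W'(x)>0$ on $(0,a)$ and then apply the mean value theorem.

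Using \eqref{Eqn:dV_dx},
\begin{align*}
W'(x)=-V'(-x)-V'(x)=\frac{(a+b)x}{(a-x)(b+x)}-\frac{(a+b)x}{(a+x)(b-x)}.
\end{align*}
For $x\in(0,a)$, both denominators are positive and $(a+b)x>0$. Hence $W'(x)>0$ if and only if $(a+x)(b-x)>(a-x)(b+x)$. Expanding, this is
\begin{align*}
ab-ax+bx-x^2>ab+ax-bx-x^2,
\end{align*}
that is, $2x(b-a)>0$. This holds because $x>0$ and $b>a$.

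Thus $W$ is strictly increasing on $[0,a)$ with $W(0)=0$, so $W(x)>0$, i.e. $V(-x)>V(x)$, for all $x\in(0,a)$. The only point needing care is checking that both arguments stay in the domain and that the denominators keep their sign there; the algebra is routine. As a cross-check, one could instead compare the series \eqref{Eqn:series_expand_log} term by term. Writing $V(x)=-a\log(1+x/a)-b\log(1-x/b)$ gives
\begin{align*}
V(-x)-V(x)=\sum_k \frac{2x^{2k+1}}{2k+1}\left(a^{-2k}-b^{-2k}\right),
\end{align*}
but this expansion is valid only for $|x|<a$, so the derivative argument is the preferable route.
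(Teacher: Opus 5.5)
Your proof is correct, and it takes a different route from the paper. The paper expands the four logarithms in $\Delta_V=V(-x)-V(x)$ using the series \eqref{Eqn:series_expand_log}. It then observes that every term of the resulting series $2\sum_{i\ge 1}\frac{x^{2i+1}}{2i+1}\left(a^{-2i}-b^{-2i}\right)$ is positive. You instead use $W(0)=0$ and show $W'>0$.

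The two arguments are closer than they look. Your $W'$ simplifies to
$W'(x)=\frac{2a^2}{a^2-x^2}-\frac{2b^2}{b^2-x^2}=\frac{2(b^2-a^2)x^2}{(a^2-x^2)(b^2-x^2)}$.
Expanding the middle expression geometrically and integrating term by term from $0$ to $x$ reproduces the paper's series exactly.

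Each approach buys something different:
\begin{itemize}
\item Your argument is finite and elementary. It needs no appeal to absolute convergence to justify combining and regrouping four infinite series.
\item The paper's argument gives an explicit positive expansion. That expansion exhibits the leading-order asymmetry $\frac{2}{3}\left(a^{-2}-b^{-2}\right)x^3$, and it matches the series manipulations the paper reuses for $\Delta_{\beta_2}$ in Lemma~\ref{lemma:beta2_upper_bounding_func}.
\end{itemize}

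One correction to your closing remark: the restriction $|x|<a$ is not a weakness of the series route here. The lemma concerns only $x\in(0,a)$, where $x/a<1$ and $x/b<1$, so the expansion is valid throughout. Moreover, $V(-x)$ is undefined for $x\ge a$, so neither argument can reach any further than the other.
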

\begin{proof}
    For any arbitrary $x\in (0,a)$, the difference $\Delta_V=V(-x)-V(x)$ can be written as:
    \small
    \begin{align}
        \Delta_V&=\text{log}\!\left [\dfrac{a^a b^b}{(a-x)^a(b+x)^b}\right]-\text{log}\!\left [\dfrac{a^a b^b}{(a+x)^a(b-x)^b}\right]\notag\\
         &=a\! \left[\text{log}\!\left(1\!+\!\frac{x}{a} \right)\!-\!\text{log}\!\left(1\!-\!\frac{x}{a} \right)\!  \right]\!+\!b\! \left[\text{log}\!\left(1\!-\!\frac{x}{b} \right)\!-\!\text{log}\!\left(1\!+\!\frac{x}{b} \right)\!  \right]\notag
    \end{align}
    \normalsize
    Since $x<a<b$, we have $(x/a)<1$ and $(x/b)<1$. Therefore, we may expand the logarithmic terms using \eqref{Eqn:series_expand_log}. Expanding and simplifying further, we arrive at:
    \begin{align}\label{Eqn:Delta_V(-x)_greater_than_V(x)}
        \Delta_V=2\sum_{i=1}^{\infty}\frac{x^{2i+1}}{2i+1}\left(\frac{b^{2i}-a^{2i}}{(ab)^{2i}}\right).
    \end{align}
    The inifinite Since every term in the infinite series is greater than zero, from \eqref{Eqn:Delta_V(-x)_greater_than_V(x)}, we conclude that $\Delta_V>0~\forall x\in (0,a)$. Therefore, $V(-x)>V(x)$ over $(0,a)$.
\end{proof}
The above result indicates that $V(-x)$ bounds $V(x)$ from above in $(0,a)$. Furthermore, when $a>b>0$ the statement of \textit{Lemma} \ref{lem:V(-x)_greater_than_V(x)} changes to
\begin{align}
    V(x)>V(-x),\forall x\in (0,b).\notag
\end{align}
An example is illustrated in Fig.\ref{fig:V(-x)_greater_than_V(x)} with $a=3$ and $b=5$. 
\begin{figure}[h]
    \centering
    \includegraphics[width=0.85\linewidth]{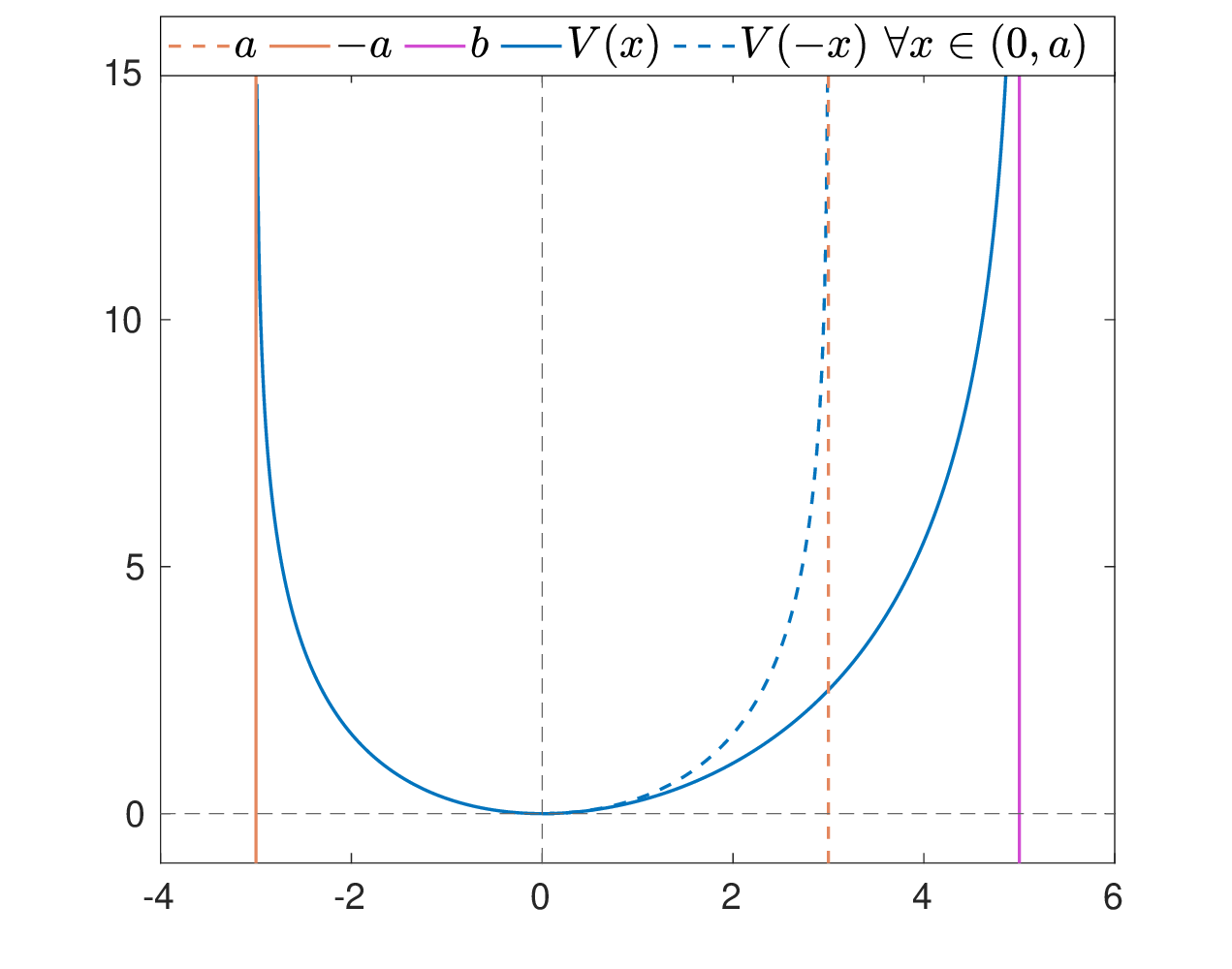}
    \caption{Illustration for \textit{Lemma} \ref{lem:V(-x)_greater_than_V(x)} with $a=3$ and $b=5$}
    \label{fig:V(-x)_greater_than_V(x)}
\end{figure}

Equipped with the preceding results, we shall now construct class $\mathcal{K}$ functions that bound the proposed BLF, $V(x)$ from above and below.
\subsection{Lower bounding class $\mathcal{K}$ function}
Without loss of generality, we show the existence of a lower bounding function for $V(x)$ when $a<b$. Similar conclusions will apply for the case of $a>b$.
\begin{lemma}\label{lemma:beta1_lower_bounding_func}
Consider the barrier Lyapunov function, $V(x)$, as defined in \eqref{Eqn:proposed_ABLF}, with $a<b$. Let $\beta_1(|x|)$ be a class $\mathcal{K}_{\infty}$ function defined as
\begin{align}\label{Eqn:beta_1_lower_bounding_fun}
    \beta_1(|x|)\coloneqq \Gamma_1 x^2, \text{ where }\Gamma_1\in \left (0,\frac{2}{a\!+\!b}\right).
\end{align}
Then $\beta_1(|x|)\leqslant V(x)~\forall x\in \mathcal{X}$.
\end{lemma}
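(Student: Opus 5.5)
The plan is to compare derivatives rather than the functions themselves, using the explicit first derivative \eqref{Eqn:dV_dx}. Define the difference $W(x)\coloneqq V(x)-\Gamma_1x^2$ on $\mathcal{X}$. Since $V(0)=0$ (shown in the proof of \textit{Lemma} \ref{Lemma:pos_defi_C_inf_ABLF}), we have $W(0)=0$. It therefore suffices to show that $W$ is nonincreasing on $(-a,0]$ and nondecreasing on $[0,b)$. Then $x=0$ is a global minimum of $W$ over $\mathcal{X}$, so $W(x)\geqslant 0$, which is exactly $\beta_1(|x|)\leqslant V(x)$. The fact that $\beta_1$ is class $\mathcal{K}_\infty$ is immediate from $\Gamma_1>0$.

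The key step is a uniform bound on the denominator of $V^{\prime}(x)$. By the AM--GM inequality, for every $x\in\mathcal{X}$ the two factors $a+x$ and $b-x$ are positive with sum $a+b$, so
\begin{align*}
0<(a+x)(b-x)\leqslant \frac{(a+b)^2}{4}.
\end{align*}
Substituting this into \eqref{Eqn:dV_dx}, for $x\in(0,b)$ we get
\begin{align*}
V^{\prime}(x)=\frac{(a+b)x}{(a+x)(b-x)}\geqslant \frac{4x}{a+b}>2\Gamma_1 x,
\end{align*}
where the last inequality uses $\Gamma_1<2/(a+b)$. For $x\in(-a,0)$ the numerator $(a+b)x$ is negative, so the same denominator bound reverses the inequality: $V^{\prime}(x)\leqslant 4x/(a+b)<2\Gamma_1x$. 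Hence $W^{\prime}(x)=V^{\prime}(x)-2\Gamma_1x$ is positive on $(0,b)$ and negative on $(-a,0)$, and the monotonicity claims follow from the mean value theorem.

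I do not expect a serious obstacle here. The only point needing care is the sign bookkeeping for $x<0$, where dividing by the positive denominator and multiplying by the negative $x$ must be tracked correctly. It is also worth remarking that the argument actually gives the bound for any $\Gamma_1\leqslant 2/(a+b)$ (even $\Gamma_1$ up to $4/(a+b)$ would not break the derivative comparison above), so the stated open interval is comfortably sufficient. A more expensive alternative would be to use a second-order Taylor argument with a lower bound on $V^{\prime\prime}$ from \textit{Corollary} \ref{corollary:minima_of_ddot_V}. That route yields a constant depending on $b^2/a$ rather than $a+b$, so I would not pursue it.
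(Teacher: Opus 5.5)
Your proof is correct, and its core is the paper's: bound $(a+x)(b-x)\leqslant (a+b)^2/4$, deduce $V^{\prime}(x)>2\Gamma_1 x$ on $(0,b)$, and integrate from $0$. The paper gets the bound by locating the maximum of the quadratic at $x=(b-a)/2$ rather than citing AM--GM.

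The only real difference is the negative half-interval. The paper does not compare derivatives on $(-a,0)$. It reduces to $(0,b)$ through \textit{Lemma} \ref{lem:V(-x)_greater_than_V(x)}, writing $V(x)=V(-|x|)>V(|x|)>\Gamma_1x^2$ for $x\in(-a,0)$; this is the step where $a<b$ is genuinely needed. Your sign-reversed comparison on $(-a,0)$ avoids that lemma and its series expansion, and never uses $a<b$. So the same few lines also prove \textit{Remark} \ref{Rem:beta_1_a_Greater_b} and \textit{Remark} \ref{Rem:beta1_a_equal_b}; the paper in fact proves the former with exactly your sign-reversed integral. The paper's route only gains a reuse of the asymmetry result it had already established.

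One correction to your closing aside: the parenthetical claim that $\Gamma_1$ up to $4/(a+b)$ would not break the derivative comparison is false. Since $\beta_1^{\prime}(x)=2\Gamma_1x$, and the AM--GM bound is attained at $x=(b-a)/2\in(0,b)$ where $V^{\prime}(x)=4x/(a+b)$ exactly, the inequality $V^{\prime}(x)>2\Gamma_1x$ fails at that point for every $\Gamma_1>2/(a+b)$. The conclusion of the lemma can fail too. Near $0$ we have $V(x)=\frac{a+b}{2ab}x^2+O(x^3)$, and $\frac{a+b}{2ab}<\frac{4}{a+b}$ whenever $1<b/a<3+2\sqrt{2}$; in that case $V(x)<\frac{4}{a+b}x^2$ for small $x\neq 0$. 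Your other observation, that the closed condition $\Gamma_1\leqslant 2/(a+b)$ suffices, is correct.
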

\begin{proof}
From \textit{Lemma} \ref{lem:V(-x)_greater_than_V(x)}, $V(-x)> V(x)$ over $(0,a)$. Therefore, it is enough to show that $\beta_1(|x|)<V(x)$ over $(0,b)$ as this ensures $\beta_1(|x|)<V(x)$ over $\mathcal{X}\backslash\{ 0\}$.    

Define $\phi(x) \coloneqq V^{\prime}-\beta^{\prime}_1$, which can be written as
\begin{align}\label{Eqn:phi_beta1}
    \phi=\frac{(a+b)x}{(a+x)(b-x)}-2\Gamma_1 x, \forall x\in (0,b).
\end{align}
Note that $\phi(x)$ does not consider the argument $|x|$, this is because $x\in (0,b)$ and hence $x>0$. Furthermore, the denominator $(a+x)(b-x)$ has no roots over $(0,b)$ therefore, the function $\phi(x)$ is well defined. The polynomial $(a+x)(b-x)$ has a maximum at $x=(b-a)/2\in (0,b)$ with the maximum value given by $(b+a)^2/4$. Therefore, as $x>0$, we have
\begin{align}
    \phi(x)>\left(\frac{4}{(a+b)}-2\Gamma_1\right)x,~\forall x\in (0,b).
\end{align}
With $\Gamma_1<2/(a+b)$, we see that $\phi(x)>0~\forall x\in (0,b)$. We now have $V(x)=0,\beta_1(|x|)=0$ iff $x=0$ and $V^{\prime}(x)>\beta^{\prime}_1(|x|)~\forall x\in (0,b)$. With this, we may now write  
\begin{align}\label{Eqn:V_minus_beta1}
    V(x)-\beta_1(|x|)=\int^{x}_{0}\phi(\tau)d\tau,~\forall x\in [0,b).
\end{align}
As $\phi(x)>0$, we conclude $V(x)>\beta_1(|x|)~\forall x\in (0,b)$.
\end{proof}
\vspace{2mm}
\begin{remark}\label{Rem:beta_1_a_Greater_b}
    Note that, \textit{Lemma} \ref{lemma:beta1_lower_bounding_func} holds even when $a>b$. To see this, we may rewrite $V(x)-\beta_1(|x|)$ in \eqref{Eqn:V_minus_beta1} as
    \small
    \begin{align}
        V(x)\!-\!\beta_1(|x|)=\int^{0}_{x}\!(-\tau)\!\left[\! \frac{(a\!+\!b)}{(a\!+\!\tau)(b\!-\!\tau)}\!-\!2\Gamma_1\!\right]\!d \tau,\forall x\in\! (-a,0].
    \end{align}
    \normalsize
    With $\Gamma_1<2/(a+b)$, the integrand is greater than zero for all $x\in (-a,0)$ and therefore $V(x)>\beta_1(|x|)$.
\end{remark}
\begin{remark}\label{Rem:beta1_a_equal_b}
    Observe that \textit{Lemma} \ref{lemma:beta1_lower_bounding_func} holds even when $a=b$. Thus, when $\Gamma_1$ satisfies the condition $0<\Gamma_1<(1/a)$, we have $\beta_1(|x|)\leqslant V(x),~\forall x\in \mathcal{X}$. This essentially implies the existence of a lower bounding class $\mathcal{K}_{\infty}$ function for the standard logarithmic BLF in the case of symmetric constraints.  
\end{remark}
\subsection{Upper bounding class $\mathcal{K}$ function}
In the previous subsection, a lower bounding $\mathcal{K}$ function was shown to exist over $\mathcal{X}$. However, such global upper bounding class $\mathcal{K}$ function cannot be constructed as $V(x)\to \infty$ as $x\to -a$ or $x\to b$. In view of this, we construct class $\mathcal{K}$ functions that upper bound $V(x)$ over open subsets of $\mathcal{X}$.
\begin{lemma}\label{lemma:beta2_upper_bounding_func}
For the barrier Lyapunov function, $V(x)$, as defined in \eqref{Eqn:proposed_ABLF}, with $a<b$, given any $\tilde x\in\mathcal{X}$, there exists an open set ${\mathcal{S}}\coloneqq (-\bar a,\bar b)\subset \mathcal{X}$ with $\tilde x\in \mathcal{S}$, and a class $\mathcal{K}_{\infty}$ function $\beta_2(|x|)=\Gamma_2 x^2$ such that 
$V(x)\leqslant \beta_2(|x|),\forall x\in \mathcal{S}$.
\end{lemma}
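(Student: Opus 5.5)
We reduce the claim to bounding the ratio $V(x)/x^2$ on a compact subinterval of $\mathcal{X}$ that contains $\tilde x$ in its interior. Given $\tilde x\in\mathcal{X}$, we first pick $\bar a\in(|\tilde x|,a)$ if $\tilde x\le 0$, and $\bar b\in(\tilde x,b)$ if $\tilde x>0$. The other endpoint is any number strictly inside $(0,a)$ or $(0,b)$; for definiteness take $\bar a=\bar b=\max(|\tilde x|, a/2)$ adjusted slightly, keeping $\bar a<a$ and $\bar b<b$. Then $\mathcal{S}=(-\bar a,\bar b)$ is open, contains $\tilde x$, and its closure $[-\bar a,\bar b]$ is a compact subset of $\mathcal{X}$. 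This last property is essential, because $V$ blows up only at $-a$ and $b$.

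Next, define $\rho(x)\coloneqq V(x)/x^2$ for $x\neq 0$. We show that $\rho$ extends continuously to $x=0$. By Lemma~\ref{Lemma:pos_defi_C_inf_ABLF}, $V$ is $\mathcal{C}^\infty$, and we have $V(0)=0$ and $V'(0)=0$ from \eqref{Eqn:dV_dx}. Taylor's theorem with integral remainder then gives
\[
V(x)=x^2\int_0^1(1-s)V''(sx)\,ds .
\]
Hence $\rho(x)=\int_0^1(1-s)V''(sx)\,ds$, which is continuous on all of $\mathcal{X}$, with $\rho(0)=V''(0)/2=(a+b)/(2ab)$ by \eqref{Eqn:ddV_dxx}. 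Since $V''>0$, $\rho$ is also positive.

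By continuity of $\rho$ on the compact set $[-\bar a,\bar b]$, we set $\Gamma_2\coloneqq\max_{x\in[-\bar a,\bar b]}\rho(x)<\infty$ (one may enlarge it by any positive amount). Then $V(x)=\rho(x)x^2\le\Gamma_2x^2=\beta_2(|x|)$ for all $x\in\mathcal{S}$. Because $\Gamma_2>0$, $\beta_2(r)=\Gamma_2r^2$ is class $\mathcal{K}_\infty$.

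To make $\Gamma_2$ explicit, we use that $V''$ is strictly convex by Corollary~\ref{corollary:minima_of_ddot_V}. Its maximum over $[-\bar a,\bar b]$ is therefore attained at an endpoint, so $\Gamma_2=\tfrac12\max\{V''(-\bar a),V''(\bar b)\}$ works, with the formula coming from \eqref{Eqn:ddV_dxx}.

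Alternatively, Lemma~\ref{lem:V(-x)_greater_than_V(x)} lets one reduce to the negative side when $\bar b\le\bar a<a$. That route is optional. The main delicate point is the behaviour of $\rho$ near $x=0$, where $V/x^2$ is a $0/0$ form, and the Taylor remainder representation handles it cleanly. The remaining step needing care is keeping the closure of $\mathcal{S}$ away from $-a$ and $b$, which the choice of $\bar a<a$ and $\bar b<b$ guarantees.
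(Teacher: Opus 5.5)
Your proof is correct, and it runs in the opposite direction from the paper's.

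The paper fixes $\Gamma_2$ first. For $\tilde x=-k$ it pins the quadratic to $V$ at that point, $\Gamma_2=(V(-k)+\delta_{\beta_2})/k^2$. It then shows $\beta_2\geqslant V$ on $[-k,k]$ by expanding all the logarithms in power series (valid since $|x|\leqslant k<a$) and checking that every term is nonnegative. Finally it lets $\mathcal{S}$ be the interval cut out by the crossings of $\beta_2$ and $V$, which exist because $V\to\infty$ at $-a$ and $b$. A positive $\tilde x=k$ is handled by moving to the level-set preimage $-\hat K$, where $V(-\hat K)=V(k)$, and invoking Lemma~\ref{lem:V(-x)_greater_than_V(x)}.

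You instead fix $\mathcal{S}$ with compact closure in $\mathcal{X}$ and read off $\Gamma_2=\max_{[-\bar a,\bar b]}V(x)/x^2$ by compactness. The integral Taylor remainder takes care of the $0/0$ at the origin.

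Your route is shorter and more rigorous:
\begin{itemize}
\item It needs no series bookkeeping.
\item It avoids the paper's informal steps about the crossing points: that they bound an interval on which $\beta_2\geqslant V$, that $\bar a<\bar b$, and that $k<\tilde\kappa<\bar b$.
\item It never uses $a<b$ or the asymmetry lemma, so it covers $a>b$ and $a=b$ verbatim.
\item It gives Remark~\ref{Rem:bounded_k2} at once.
\end{itemize}

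What the paper's construction buys is a closed-form, nearly sharp constant. Any admissible coefficient must satisfy $\Gamma_2 k^2\geqslant V(-k)$, so for $\tilde x=-k$ the paper's choice exceeds the smallest possible value only by $\delta_{\beta_2}/k^2$. Its $\mathcal{S}$ is then the largest interval about the origin on which that quadratic dominates. Your $\max$ is likewise optimal for the set you chose. However, your closed-form fallback $\tfrac12\max\{V''(-\bar a),V''(\bar b)\}$ is loose near the barriers, where $V''$ grows like $(a+x)^{-2}$ while $V$ grows only logarithmically. This affects only the overshoot factor $\sqrt{k_2/k_1}$ in Theorem~\ref{theorem:local_exponential_stability}, not the decay rate.

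One slip: the ``for definiteness'' choice $\bar a=\bar b=\max(|\tilde x|,a/2)$ breaks when $\tilde x\in[a,b)$, since then $\bar a\geqslant a$. Keep your general prescription instead: in that case take any $\bar a\in(0,a)$ and $\bar b\in(\tilde x,b)$.
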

\begin{proof}
As a consequence of \textit{Lemma} \ref{lem:V(-x)_greater_than_V(x)}, we can conclude that the construction for $\beta_2(|x|)$ must be guided by the lower of the two barriers $a$ and $b$. %Therefore, we first consider the case when $\tilde x\in(-a,0)$ since a symmetric upper bound function for $V(x)$ over $(-\bar a,0)$ is automatically an upper bound on it over $(0,\bar b)$.
Suppose $\tilde x\!=\!-k$ with $0\!<\!k\!<\!a$. Define the function, $\beta_2(|x|)$, $\beta_2\colon \mathbb{R}_+ \to \mathbb{R}_+$, as:
\small
\begin{align}\label{Eqn:beta_2_upper_bounding_fun}
    \beta_2(|x|)\coloneqq \Gamma_2x^2,~\Gamma_2\!=\!\frac{1}{k^2}\!\Bigg[ \text{log}\!\left (\!\dfrac{a^a b^b}{(a-k)^a(b+k)^b}\!\right)\!+\!\delta_{\beta_2}\!\Bigg],
\end{align}
\normalsize
where $\delta_{\beta_2}$ is an arbitrarily small positive constant chosen to ensure that $\tilde x=-k$ lies strictly within the interior of $\mathcal{S}$ as shown below. Consider the difference function $\Delta_{\beta_2}=\beta_2(|x|)-V(x)$ which can be rewritten as
\begin{align}\label{Eqn:Delta_beta_2}
   \Delta_{\beta_2}&=a \left[\text{log}\!\left(1+\frac{x}{a} \right)\!-\!\frac{x^2}{k^2}\text{log}\!\left(1-\frac{k}{a} \right)  \right]\notag\\
    &+b \left[\text{log}\!\left(1-\frac{x}{b} \right)\!-\!\frac{x^2}{k^2}\text{log}\!\left(1+\frac{k}{b} \right)  \right]\!+\!\delta_{\beta_2} \frac{x^2}{k^2}.
\end{align}
We have $k/a<1$ and considering $x\in (-a,a)$ we have $x/a<1$, $,x/b<1$. Thus expanding $\Delta_{\beta_2}$ using the logarithmic identity $\text{log}(1\!+\!\hat x)$ we have
% To see this, expand $\Delta_{\beta_2}$ for $x\in \!(-a,a)$ using the logarithmic identity $\text{log}(1\!+\!\hat x)$ as
\begin{align}\label{Eqn:Delta_beta_2_expanded}
\Delta_{\beta_2}=\sum_{j=1}^{\infty}&x^2\Bigg[\frac{k^{2j-1}+x^{2j-1}}{2j+1}\left(\frac{b^{2j}-a^{2j}}{a^{2j}~b^{2j}}\right)\notag\\
&+\frac{k^{2j}-x^{2j}}{2j}\left(\frac{b^{2j+1}+a^{2j+1}}{a^{2j+1} ~b^{2j+1}} \right)\Bigg]\!+\delta_{\beta_2} \frac{x^2}{k^2}.
\end{align}
Clearly, from \eqref{Eqn:Delta_beta_2_expanded} we have $\Delta_{\beta_2}>0$ when $x\in [-k,k]\backslash\{0\}$ (if $\delta_{\beta_2}=0$ we have $\Delta_{\beta_2}$=0 at $x=-k$). Further, the function $\Delta_{\beta_2}$ has zero crossings as $\beta_2(|x|)$ tends to infinity asymptotically and therefore will intersect the BLF, $V(x)$, at two points in $\mathcal{X}$ (see Fig \ref{fig:zero_crossing_for_beta_2} where $\Gamma_2$ is constructed for $x=-\tilde x_1,\tilde x_2$ and $x=-\tilde x_3$). Hence, there exist two points $x=-\bar a$ and $x=\bar b$, satisfying $k<\bar a<\bar b<b$, at which we have $\Delta_{\beta_2}=0$. It is easy to that $\bar a<\bar b$ holds as the inherent asymmetry of $V(x)$ leads $\beta_2(|x|)$ to intersect $V(x)$ at a point beyond $x=\bar a$.
%Notice that we claim $\bar a<\bar b$, this is due to the inherent asymmetry of $V(x)$ and the symmetry of $\beta_2(|x|)$, which leads $\beta_2(|x|)$ to intersect $V(x)$ at a point beyond $x=\bar a$. Therefore, we have $\Delta_{\beta_2}\geqslant 0,\forall x\in (-\bar a,\bar b)$, thus leading to $V(x)\leqslant \beta_2(\|x\|),\forall x\in (-\bar a,\bar b)$ and clearly $\tilde x=-k\in (-\bar a,\bar b)$.

Now, let $\tilde x\in (0,b)$ and suppose $\tilde x=k$. Let $-\hat{K}$ be the preimage of $V(k)$ in the open set $(-a,0)$, i.e., $\hat K=V^{-1}(V(k))$ over $(-a,0)$. Let $\Gamma_2$ now be given by
\begin{align}
\Gamma_2\!=\!\frac{1}{\hat K^2}\!\left[ \text{log}\left (\!\dfrac{a^a b^b}{(a-\hat K)^a(b+\hat K)^b}\!\right)\!+\!\delta_{\beta_2}\right].    
\end{align}
For the given $\Gamma_2$, from the previous arguments, we know that there exists $\mathcal{S}\!=\!(-\bar{a},\bar{b})$, with $\hat K\!<\!\bar{a}\!<\!\bar{b}<b$, over which $V(x)\!\leqslant\! \beta_2$. To argue that $\tilde x\!=\!k\in \mathcal{S}$, consider $x\!=\!\tilde \kappa$ to be the preimage of $V(-\bar a)$ over the open set $(0,b)$. The symmetry of $\beta_2$ prevents it from intersecting $V(x)$ at $x=\tilde \kappa$, which implies $\tilde \kappa<\bar b$ and thus $\tilde \kappa\in \mathcal{S}$. Furthermore, since $\hat{K}<\bar a$, we have $V(-\hat K)<V(-\bar a)$ as $V(x)$ is strictly convex and $V(x)\rightarrow \infty$ as $x\rightarrow -a$. Since $V(-\hat K)=V(k)$ and $V(-\bar a)=V(\tilde \kappa)$, we have $V(k)<V(\tilde \kappa)$ which further implies that $k<\tilde \kappa$ and therefore $k\in \mathcal{S}$.
\begin{figure}[htbp]
    \centering
    \includegraphics[scale=0.5]{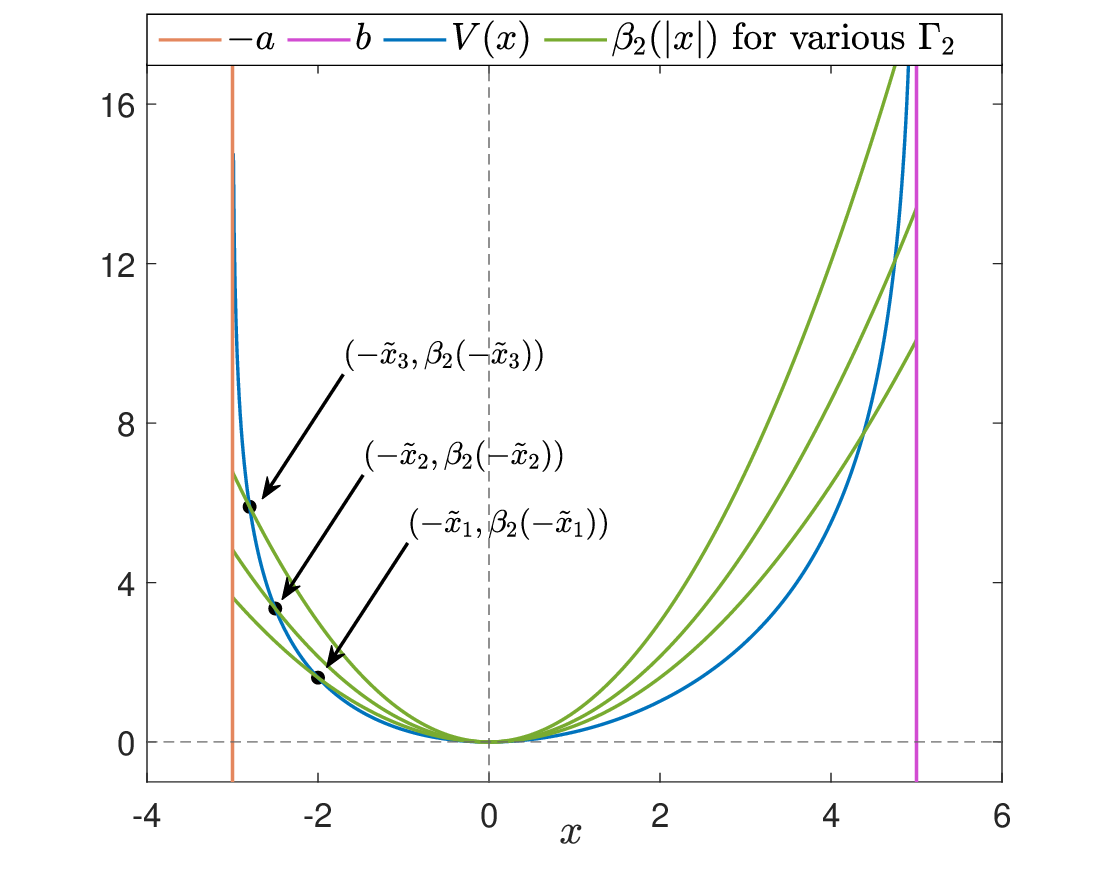}
    \caption{$\beta_2(|x|)$ with different $\Gamma_2$ constructed with $\tilde x_1\!<\!\tilde x_2\!<\!\tilde x_3$  and $\delta_{\beta_2}=0$}
    \label{fig:zero_crossing_for_beta_2}
\end{figure}
The construction of $\beta_2(|x|)$ when $a>b$ will follow along similarly by starting from the open set $(0,b)$.
\end{proof}
\begin{remark}\label{Rem:bounded_k2}
    Using the construction method above, we may construct $\beta_2(|x|)$ over $(-\bar a,\bar b)$ such that $\bar a=a-\epsilon_a$ and $\bar b=b-\epsilon_b$ with $\epsilon_a,\epsilon_b>0$ being arbitrarily small. Although $\Gamma_2$ grows large as $\epsilon_a,\epsilon_b\to 0$ it remains bounded as $x$ lies strictly within $(-a,b)$.
\end{remark}
%For the above method for the construction $\beta_2(|x|)$ we can find $\bar a$ and $\bar b$ that are arbitrarily close to $a$ and $b$, respectively, so that $\beta_2(|x|)$ bounds $V(x)$ above. Thus for any open set $\mathcal{S}\subset \mathcal{X}$ close to the set $\mathcal{X}$. 

%From \textit{Lemma} \ref{lemma:beta2_upper_bounding_func}, we see the construction of a class $\mathcal{K}_{\infty}$ upper bound of $V(x)$ for arbitrary open subsets of $\mathcal{X}$. We shall now look at a rational function that will bound the function $V(x)$ above $\forall x\in \mathcal{X}$.

\section{Control design and analysis} \label{Sec:Control_design_and_analysis}
In this section, using the proposed function, we shall first present the design steps, using the backstepping technique, to obtain the controller. Thereafter, we shall establish exponential stability for the closed-loop system.

\begin{remark}\label{Rem:Comment_on_backstepping}
    The design steps for the control follow along similarly as in \cite{tee2009barrier}. However, the designed stabilizing functions $\alpha_1$ and $\alpha_2$ differ from \cite{tee2009barrier}. Consequently, the steps involving $\alpha_1$ and $\alpha_2$ are described in detail and the rest of design is described in brief. 
\end{remark}
\textit{Step 1:} Define the error states as $z_1\coloneqq x_1\!-\!y_d$ and $z_2\coloneqq x_2\!-\!\alpha_{1}(z_1)$, where $\alpha_1$ is the stabilizing function we design in this step. From \textit{Assumption} \ref{Assumption:bounds_on_yd} and the output constraint $|y(t)|<k_y$, there exist constants $k_a>0$ and $k_b>0$ given by
\begin{align}
k_a=k_y-Y_{lb},~k_{b}=k_y-Y_{ub},\label{Eqn:ka_kb_def}    
\end{align}
and thus the error state $z_1$ satisfies
\begin{align}\label{Eqn:bounds_on_z1}
 -k_{a}<z_1<k_{b}.  
\end{align}
Therefore, $z_1$ belongs to the set $\mathcal{Z}_1\coloneqq \{ z_1\in \mathbb{R}\bigl \vert -k_a<z_1<k_b\}$. Consider the asymmetric barrier Lyapunov function, as given in \eqref{Eqn:proposed_ABLF}, defined over the set $\mathcal{Z}_1$:  
\begin{align}\label{Eqn:Design_V1_z1}
V_{1}(z_1)=\text{log}\!\left [\dfrac{k^{k_a}_a k^{k_b}_b}{(k_a+z_1)^{k_a}(k_b-z_1)^{k_{b}}}\right].
\end{align}
Differentiating $V_{1}$ with respect to time and substituting for $\dot z_1$ and $x_2$, we have
\begin{align}
\dot V_1\!=\!\dfrac{(k_a\!+\!k_b)z_1 \dot{z}_1 }{(k_a\!+\!z_1)(k_b\!-\!z_1)}\!=\!\dfrac{(k_a\!+\!k_b)z_1\bigl(f_1\!+\!g_1(z_2\!+\!\alpha_1)\!-\!\dot{y}_d\bigr) }{(k_a\!+\!z_1)(k_b\!-\!z_1)}.\notag
\end{align}
The stabilizing function, $\alpha_1$, is designed as:
\begin{align}\label{Eqn:alpha_1}
    \alpha_1\!=\!\frac{1}{g_1}\!\left(\!-f_1-\frac{\kappa_1 \tilde \kappa_1(k_a+k_b) z_1}{(k_a+z_1)(k_b-z_1)}+\dot{y}_d  \right),
\end{align}
where $\kappa_1\!>\!0$ is a constant and $\tilde \kappa_1$ is defined as: $\tilde \kappa_1\coloneqq (\max\{k_a,k_b\})^2/\min\{k_a,k_b \}$. Substituting \eqref{Eqn:alpha_1} in the expression for $\dot{V}_1$, we have the following relation:
\begin{align}
    \dot{V}_1=\frac{(k_a\!+\!k_b) g_1 z_1z_2}{(k_a\!+\!z_1)(k_b\!-\!z_1)}\!-\!\kappa_1\tilde \kappa_1\left[\!\frac{(k_a\!+\!k_b) z_1}{(k_a\!+\!z_1)(k_b\!-\!z_1)}\! \right]^2,
\end{align} 
%where $\tilde \beta(z_1)$ is as given in \eqref{Eqn:rational_upper_bound_a<b} or \eqref{Eqn:rational_upperBound_a>b} depending on the larger value among $k_a$ and $k_b$. 
where the cross term involving $z_1$ and $z_2$ will be eliminated in the next step.
\begin{remark}\label{Rem:difference_in_alpha1}
    When $k_a=k_b$, the stabilizing function $\alpha_1$ given by \eqref{Eqn:alpha_1} takes the following form:
    \begin{align}
        \alpha_1=\frac{1}{g_1}\!\left(\!-f_1-\frac{\kappa_1 2k^2_az_1}{k^2_a-z^2_1}+\dot{y}_d  \right),\label{Eqn:alpha1_for_equal_ka_kb}
    \end{align}
    which differs from  $\alpha_1(z_1)$ designed in \cite{tee2009barrier}. Consequently, the resulting control, $u(t)$, differs from the control derived for symmetric constraints in \cite{tee2009barrier}, despite the fact that $V(x)$, defined in \eqref{Eqn:Design_V1_z1}, reduces to the symmetric barrier Lyapunov function in \eqref{Eqn:a_equal_b_Sym_ABLF}.
\end{remark}
\textit{Step 2:}
Define $z_3\coloneqq x_3-\alpha_2$, where $\alpha_2$ is the stabilizing function to be designed in this step. State $x_2$ is unconstrained, a quadratic function is used for $z_2$. Let the Lyapunov candidate function be $V_2=V_1(z_1)+\frac{1}{2}z^2_2$.
% \begin{align}\label{Eqn:Design_V2}
%     V_2=V_1(z_1)+\frac{1}{2}z^2_2.
% \end{align}
The stabilizing function $\alpha_2$ is designed as
\begin{align}\label{Eqn:alpha_2}
        \alpha_2=\frac{1}{g_2}\left(-f_2-\frac{(k_a+k_b) g_1 z_1}{(k_a+z_1)(k_b-z_1)}-\kappa_2 z_2+ \dot{\alpha}_1 \right),
\end{align}
where $\kappa_2$ is a positive constant. Differentiating $V_2$ with respect to time and substituting for $\dot V_1$, $\dot x_2$, $x_3$ and $\alpha_2$, we have
% \begin{align}
% \dot{V}_2=\dot{V}_1+z_2\dot{z}_2=\dot{V}_1+z_2 \bigl(f_2+g_2(z_3+\alpha_2)-\dot{\alpha}_1\bigr).
% \end{align}
%  Substituting $\alpha_2$ in $\dot{V}_2$:
\begin{align}
    \dot{V}_2=-\kappa_1\tilde \kappa_1\left[\!\frac{(k_a\!+\!k_b) z_1}{(k_a\!+\!z_1)(k_b\!-\!z_1)}\! \right]^2 -\kappa_2z^2_2+g_2z_2z_3.
\end{align}
From Step 3 onwards, the design is carried out similarly by considering quadratic terms for each successive $z_i$. The general steps are as described below.

\textit{Step i $(3\leqslant i \leqslant n)$:} Define $z_{i+1}=x_{i+1}-\alpha_{i},\forall i\in \{3,4,\ldots,n \}$, where $\alpha_i$'s are the stabilizing functions designed at the $i^{\text{th}}$ step. Let the Lyapunov function candidate for \textit{step i} be ${V}_i=V_{i-1}+\frac{1}{2}z^2_{i}, ~~\forall i\in \{3,4,\ldots,n \}$. The corresponding stabilizing function, $\alpha_i$, is designed as 
\begin{equation}\label{Eqn:alpha_i}
    \begin{split}
    \alpha_{i}=\!\frac{1}{g_i}\!\bigl(\!-f_i\!-\!g_{i-1}z_{i-1}\!-\!\kappa_i z_i\!+\!\dot{\alpha}_{i-1} \bigr)~~\forall i\!\in\!\{3,\dots,n\}
    \end{split}
\end{equation}
where $\kappa_i$ is a positive constant. Note that the stabilizing function designed at \textit{Step $n$} is the control input $u(t)$. Therefore, the control $u(t)$ is given by
\begin{align}\label{Eqn:u(t)}
    u=\alpha_n=\frac{1}{g_n}\!\bigl(\!-f_n\!-\!g_{n-1}z_{n-1}\!-\!\kappa_n z_n\!+\!\dot{\alpha}_{n-1} \bigr).
\end{align}
The time derivative of $\dot \alpha_{i-1}$ is given as:
\begin{align}
    \dot{\alpha}_{i-1}=\sum^{i-1}_{j=1}\frac{\partial \alpha_{i-1}}{\partial x_j}\dot{x}_j+\sum^{i-1}_{j=0}\frac{\partial \alpha_{i-1}}{\partial y^{(j)}_d}y^{(j+1)}_d~~i\in \{2,3,\dots,n\},\notag
\end{align}
where $y^{(j)}_d$ is the $j^{\text{th}}$ order time derivative of $y_d$. 

Note: The stabilizing function $\alpha_1$ is a function of $z_1$ and $y_d$, similarly we have $\alpha_2=\alpha_2(x_1,x_2,z_1,z_2,y_d,\dot y_d)$. Along the same lines, the stabilizing function at \textit{Step} $i$ is a function of $x_i,z_i$ for $i\in \{1,2,\ldots,i\}$ and $y_d,\dot y_d,\ldots,y^{(i-1)}$. These explicit arguments have been omitted throughout this article for notational brevity.

This completes the design, and the closed-loop dynamics obtained after $n$ steps are given by:
\begin{equation}\label{Eqn:closed_loop_eqns}
    \begin{split}
        \dot{z}_1&=-\frac{\kappa_1 \tilde \kappa_1(k_a+k_b) z_1}{(k_a+z_1)(k_b-z_1)}+g_1z_2,\\
        \dot{z}_2&=-\dfrac{(k_a+k_b) g_1z_1}{(k_a+z_1)(k_b-z_1)}-\kappa_2z_2+g_2z_3,\\
        \dot{z}_i&=-g_{i-1}z_{i-1}-\kappa_i z_i+g_iz_{i+1},~~i=3,\dots,n-1\\
        \dot{z}_n&=-g_{n-1}z_{n-1}-\kappa_n z_n.
    \end{split}
\end{equation}
The time derivative of $V_{n}$ along the closed-loop system in \eqref{Eqn:closed_loop_eqns} can be written as
\begin{align}\label{Eqn:dot_Vn_1}
    \dot V_n=-\kappa_1\tilde \kappa_1\left[\!\frac{(k_a\!+\!k_b) z_1}{(k_a\!+\!z_1)(k_b\!-\!z_1)}\! \right]^2-\sum^n_{i=2}\kappa_iz^2_i.
\end{align}
\begin{remark}\label{Rem:stabilizing_func_C_infinity}
    Unlike in \cite{tee2009barrier}, where the stabilizing functions $\alpha_i$ are $\mathcal{C}^{n-i}$, the designed $\alpha_i$ in this article are $\mathcal{C}^{\infty}$.
\end{remark}
Note that the upper bounding and lower bounding quadratic class $\mathcal{K}$ functions already equips us with a tool to obtain exponential stability, provided a similar inequality can be shown for $\dot V_n(z)$. However, in that scenario the exponential decay rate will depend on $\Gamma_2$ which would result in slower decay rates for initial conditions very close to the boundary of $\mathcal{Z}_1$. Interestingly, it is easy to circumvent this issue and obtain an improved exponential stability result by showing that $V_{n}(z)$ satisfies the differential inequality in \eqref{Eqn:dot_Psi}. This is enabled by the following alternative upper bounding of $V_1(z_1)$.
%We shall now prove an important inequality that will enable us to establish the differential inequality in \eqref{Eqn:dot_Psi}.  %Clearly, the second result is a consequence of \textit{Lemma} \ref{lemma:beta1_lower_bounding_func} and \textit{Lemma} \ref{lemma:beta2_upper_bounding_func}.
\begin{lemma}\label{lemma:rational_func_upper_bound}
Consider the barrier Lyapunov function $V_1(z_1)$, as defined in \eqref{Eqn:Design_V1_z1}, and define the function  $\tilde \beta\colon \mathcal{Z}_1\longrightarrow \mathbb{R}$ as:
\begin{align}\label{Eqn:rational_upper_bound}
    \tilde \beta(z_1)\coloneqq\frac{(\max\{k_a,k_b \})^2}{\min\{k_a,k_b\}}\!\left[\!\frac{(k_a\!+\!k_b)z_1}{(k_a\!+\!z_1)(k_b\!-\!z_1)} \! \right]^2.
\end{align}
The following inequality holds: 
\begin{align}\label{Eqn:rational_func_inequality}
    V_1(z_1)\leqslant\tilde \beta(z_1),\forall z_1\in \mathcal{Z}_1.
    % &\text{If}~k_a>k_b~\text{then}~V_1(z_1)\leqslant \frac{k_a^2}{k_b}\left[\frac{(k_a+k_b)z_1}{(k_a+z_1)(k_b-z_1)}  \right]^2.
\end{align}
% then we have 
% \begin{align}
%     V(x)\leqslant \tilde \beta(x),~\forall x\in \mathcal{ X}.
% \end{align}
\end{lemma}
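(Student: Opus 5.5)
The plan is to compare $V_1$ and $\tilde\beta$ through their derivatives, exactly as in the proof of \textit{Lemma} \ref{lemma:beta1_lower_bounding_func}. Write $a\coloneqq k_a$, $b\coloneqq k_b$, $x\coloneqq z_1$ and $c\coloneqq (\max\{a,b\})^2/\min\{a,b\}$. By \eqref{Eqn:dV_dx}, the bracketed term in \eqref{Eqn:rational_upper_bound} is precisely $V^{\prime}(x)$, so the claim reads
\begin{align}
    V(x)\leqslant c\,\bigl(V^{\prime}(x)\bigr)^2,\quad \forall x\in \mathcal{X}.\notag
\end{align}
I will therefore study the difference function $h(x)\coloneqq c\,(V^{\prime}(x))^2-V(x)$ on $\mathcal{X}=(-a,b)$. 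It is $\mathcal{C}^{\infty}$ by \textit{Lemma} \ref{Lemma:pos_defi_C_inf_ABLF}, and $h(0)=0$ because $V(0)=0$ and $V^{\prime}(0)=0$.

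Differentiating gives
\begin{align}
    h^{\prime}(x)=2cV^{\prime}(x)V^{\prime\prime}(x)-V^{\prime}(x)=V^{\prime}(x)\bigl(2cV^{\prime\prime}(x)-1\bigr).\notag
\end{align}
The key step is to show that the second factor is strictly positive on $\mathcal{X}$. This is where Corollaries \ref{corollary:minima_of_ddot_V} and \ref{corollary:inequality_ddV_dxx_a_equal_b} enter, and there are three cases.
\begin{itemize}
    \item If $a<b$, then $c=b^2/a$, and \eqref{Eqn:inequality_ddV_dxx_a<b} gives $cV^{\prime\prime}(x)>1$.
    \item If $a>b$, then $c=a^2/b$, and \eqref{Eqn:inequality_ddV_dxx_a>b} gives the same bound.
    \item If $a=b$, then $c=a$, and \eqref{Eqn:inequality_ddV_dxx_a_equal_b} gives $aV^{\prime\prime}(x)>1$.
\end{itemize}
In every case $2cV^{\prime\prime}(x)-1>1>0$. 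Only $cV^{\prime\prime}>1/2$ is actually needed, so the corollaries give more than enough.

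Since $(a+x)(b-x)>0$ on $\mathcal{X}$, \eqref{Eqn:dV_dx} shows that $V^{\prime}(x)$ has the sign of $x$. Hence $h^{\prime}(x)<0$ on $(-a,0)$ and $h^{\prime}(x)>0$ on $(0,b)$. Consequently $h$ is strictly decreasing on $(-a,0]$ and strictly increasing on $[0,b)$, so $x=0$ is its global minimizer on $\mathcal{X}$. Equivalently, one can write
\begin{align}
    h(x)=\int_0^x V^{\prime}(\tau)\bigl(2cV^{\prime\prime}(\tau)-1\bigr)\,d\tau\geqslant 0\notag
\end{align}
for $x\in[0,b)$, and treat $x\in(-a,0]$ analogously as in \textit{Remark} \ref{Rem:beta_1_a_Greater_b}. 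Either way, $h(x)\geqslant h(0)=0$ on $\mathcal{X}$, which is \eqref{Eqn:rational_func_inequality}, with equality only at $z_1=0$.

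The only real obstacle would be the uniform lower bound on $V^{\prime\prime}$. That bound has already been established through the Cardan-formula computation in \textit{Corollary} \ref{corollary:minima_of_ddot_V}, so the remaining argument is routine. The one point needing care is matching the constant $c$ to the correct case of the corollary, according to which of $k_a,k_b$ is larger, which the case split above handles.
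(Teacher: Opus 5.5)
Your proof is correct and follows the paper's argument. It uses the same difference $\tilde\beta - V_1$, the same factorization of its derivative as $V^{\prime}(2cV^{\prime\prime}-1)$, and the same appeal to Corollaries~\ref{corollary:minima_of_ddot_V} and~\ref{corollary:inequality_ddV_dxx_a_equal_b}. The one difference is that you treat $(0,k_b)$ directly through the sign of $V^{\prime}$, whereas the paper first reduces to $(-k_a,0)$ via \textit{Lemma}~\ref{lem:V(-x)_greater_than_V(x)}. Your route is cleaner: $\tilde\beta$ is itself asymmetric, with $\tilde\beta(-x)>\tilde\beta(x)$ for $0<x<k_a$ when $k_a<k_b$, so that reduction alone would not cover $(0,k_b)$.
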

\begin{proof}
%Consider the inequality in \eqref{Eqn:rational_upper_bound_a<b}
We shall first consider the case for $k_a<k_b$. From \textit{Lemma} \ref{lem:V(-x)_greater_than_V(x)}, we have $V_1(z_1)>V_1(-z_1) \forall z_1\in (-k_a,0)$. Hence, it suffices to show that the inequality in \eqref{Eqn:rational_func_inequality} holds over $(-k_a,0)$. To show this, define $\tilde \phi(z_1) \coloneqq {\tilde \beta}^{\prime}(z_1)-V^\prime(z_1)$. Differentiating $\tilde \beta(z_1)$ and $V_1(z_1)$ with respect to $z_1$ and substituting in $\tilde \phi(z_1)$, we have
    \begin{align}
        \tilde{\phi}(z_1)=\frac{(k_a+k_b)z_1}{(k_a+z_1)(k_b-z_1)}\left[\frac{2k_b^2}{k_a} \frac{(k_a+k_b)(z^2_1+k_a k_b)}{(k_a+z_1)^2(k_b-z_1)^2}-1 \right].\notag
    \end{align}
    From \eqref{Eqn:ddV_dxx}, we can further reduce the equation for $\tilde \phi(z_1)$ as:
    \begin{align}\label{Eqn:phi_tilde_beta_tilde}
        \tilde{\phi}(z_1)=\frac{(k_a+k_b)z_1}{(k_a+z_1)(k_b-z_1)}\left[\frac{2k_b^2}{k_a}V^{\prime \prime}_1(z_1) -1 \right].
    \end{align}
    Similar to the arguments in the proof of \textit{Lemma} \ref{lemma:beta1_lower_bounding_func}, we may write $\tilde \beta (z_1)-V(z_1)$ as
    \begin{align}
        \tilde \beta (z_1)\!-\!V(z_1)&=\int_0^{z_1} \tilde{\phi}(\tau) d\tau,~\forall z_1\in\! (\!-k_a,0]\notag\\ 
        &{=\!\int_{z_1}^0 \!\frac{(k_a\!+\!k_b)(\!-\tau)}{(k_a\!+\!\tau)(k_b\!-\!\tau)}\!\left[\!\frac{2k_b^2}{k_a}V^{\prime\prime}_1(\tau)\!-\!1\! \right]\! d \tau.}\label{Eqn:tilde_beta_minus_V}
    \end{align}
    From \eqref{Eqn:inequality_ddV_dxx_a<b} in \textit{Corollary} \ref{corollary:minima_of_ddot_V}, we have $(k_b^2/k_a)V^{\prime \prime}_1(\tau)>1$ and therefore the integrand in \eqref{Eqn:tilde_beta_minus_V} is greater than zero for all $z_1\in (-k_a,0)$. Hence, we have $\tilde \beta(z_1)-V_1(z_1)>0\ \forall z_1\in (-k_a,0)$, from which we conclude $V_1(z_1)<\tilde \beta(z_1)$ over $\mathcal{Z}_1\backslash\{0 \}$.

    The proof for the case when $k_a>k_b$ follows along similar lines as above, using \eqref{Eqn:inequality_ddV_dxx_a>b} in \textit{Corollary} \ref{corollary:minima_of_ddot_V}.

    When $k_a=k_b$, the inequality in \eqref{Eqn:rational_func_inequality} becomes
\begin{align}\label{Eqn:rational_inequality_a_equal_b}
 k_a\text{log}\!\left[\!\frac{k_a^2}{k_a^2\!-\!z_1^2}\!\right]\!\leqslant k_a\!\left[\!\frac{2k_az_1}{k^2_a\!-\!z^2_1}\!\right]^2\!=\!\tilde \beta(z_1),\forall |z_1|<k_a,
\end{align}
which can be argued similarly through \textit{Corollary} \ref{corollary:inequality_ddV_dxx_a_equal_b}.
\end{proof}
\begin{remark}\label{Rem:Rational_upperBound_a_equal_b}
Recall the inequality from \cite{TeeRen_inequa_2010}
\begin{align}\label{Eqn:existing_inequality_tilde_beta}
    \text{log}\left[\frac{k_a^2}{k_a^2-z_1^2}\right]\leqslant \frac{z^2_1}{k^2_a-z^2_1},\forall |z_1|<k_a
\end{align} 
A brief analysis reveals that the following holds:
\begin{align}\label{Eqn:overall_compar_inequality}
    \text{log}\!\left[\!\frac{k_a^2}{k_a^2\!-\!z_1^2}\!\right]\!\leqslant\! \frac{z^2_1}{k^2_a\!-\!z^2_1}\!\leqslant\! \left[\!\frac{2k_az_1}{k^2_a\!-\!z^2_1}\!\right]^2,\forall |z_1|<k_a.
\end{align}
From \eqref{Eqn:overall_compar_inequality}, we see that the proposed inequality \eqref{Eqn:rational_func_inequality} extends the inequality \eqref{Eqn:existing_inequality_tilde_beta} from symmetric to asymmetric bounds.
\end{remark}
We shall now establish that $V_n(z)$ can be polynomially bound as in \eqref{Eqn:polynomial_bound_Psi}. 
\begin{corollary}\label{corollary:bounds_Vn}
For arbitrary $z_1\in \mathcal{Z}_1$, there exists $\mathcal{S}\subset \mathcal{Z}_1$, and positive constants $k_1,k_2\in\mathbb{R}$ such that $z_1\in\mathcal{S}$ and
\begin{align}\label{Eqn:bounds_Vn}
    k_1\|z\|^2\leqslant V_n(z)\leqslant k_2\|z\|^2~~~\forall z\in \mathcal{S}\times \mathbb{R}^{n-1}
\end{align}
\end{corollary}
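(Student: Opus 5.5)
The plan is to split $V_n$ into its barrier part and its quadratic part, and to bound each part from both sides by quadratics. By construction,
\begin{align}
V_n(z)=V_1(z_1)+\frac{1}{2}\sum_{i=2}^{n}z_i^2,\notag
\end{align}
where $V_1$ is the unified ABLF \eqref{Eqn:Design_V1_z1} on $\mathcal{Z}_1=(-k_a,k_b)$. The quadratic part already satisfies the required bounds trivially, with constant $1/2$ on both sides, and the state $z_{2:n}$ ranges over all of $\mathbb{R}^{n-1}$. So all the work lies in bounding $V_1(z_1)$ above and below by multiples of $z_1^2$. The constants $k_1,k_2$ are then obtained by taking a minimum, respectively a maximum, with $1/2$, using $\|z\|^2=z_1^2+\sum_{i\geqslant 2}z_i^2$.

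For the lower bound, I will invoke \textit{Lemma} \ref{lemma:beta1_lower_bounding_func} with $a=k_a$ and $b=k_b$. This handles the case $k_a<k_b$; \textit{Remarks} \ref{Rem:beta_1_a_Greater_b} and \ref{Rem:beta1_a_equal_b} cover $k_a>k_b$ and $k_a=k_b$. Fixing some $\Gamma_1\in(0,2/(k_a+k_b))$ gives $\Gamma_1 z_1^2\leqslant V_1(z_1)$ on all of $\mathcal{Z}_1$. This bound is global in $z_1$, so no restriction to a subset is needed here.

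For the upper bound, given the arbitrary $z_1\in\mathcal{Z}_1$, I will apply \textit{Lemma} \ref{lemma:beta2_upper_bounding_func} with $\tilde x=z_1$. It produces an open interval $\mathcal{S}=(-\bar a,\bar b)\subset\mathcal{Z}_1$ containing $z_1$, together with $\Gamma_2>0$ such that $V_1(\zeta)\leqslant\Gamma_2\zeta^2$ for all $\zeta\in\mathcal{S}$. In the case $k_a>k_b$ the construction starts from $(0,k_b)$ instead, as stated at the end of that proof. In the case $k_a=k_b$, the same construction through the series expansion \eqref{Eqn:Delta_beta_2_expanded} goes through, because the asymmetric terms vanish and only the even-power terms remain, all of which are nonnegative on $[-k,k]$.

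Combining the two bounds, I will set $k_1=\min\{\Gamma_1,1/2\}$ and $k_2=\max\{\Gamma_2,1/2\}$, which gives \eqref{Eqn:bounds_Vn} on $\mathcal{S}\times\mathbb{R}^{n-1}$. Both constants are positive, and $k_2$ is finite by \textit{Remark} \ref{Rem:bounded_k2}. I expect no real obstacle here, since the substance is already in the two lemmas. The only points needing care are checking that the constructed $\mathcal{S}$ genuinely contains the given $z_1$ in every sign case, which is the $\tilde\kappa$ argument in the proof of \textit{Lemma} \ref{lemma:beta2_upper_bounding_func}, and noting that $k_2$ depends on the chosen $z_1$, so the resulting bound is local.
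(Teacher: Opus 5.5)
Your proposal is correct and follows the paper's proof essentially verbatim: the global lower bound $\Gamma_1 z_1^2\leqslant V_1(z_1)$ on $\mathcal{Z}_1$ comes from \textit{Lemma}~\ref{lemma:beta1_lower_bounding_func}, the upper bound $V_1(z_1)\leqslant\Gamma_2 z_1^2$ on the $z_1$-dependent interval $\mathcal{S}$ comes from \textit{Lemma}~\ref{lemma:beta2_upper_bounding_func}, and the two are combined through $k_1=\min\{\Gamma_1,1/2\}$ and $k_2=\max\{\Gamma_2,1/2\}$. Your write-up is slightly more careful than the paper's in two places: you treat the cases $k_a>k_b$ and $k_a=k_b$ explicitly, and you state the upper-bound inequality in the right direction, whereas the paper's proof writes it reversed by a typo.
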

\begin{proof}
    From \textit{Lemma} \ref{lemma:beta1_lower_bounding_func}, given $\Gamma_1\in (0,2/(k_a+k_b))$ we have $\Gamma_1 z^2_1\leqslant V_1(z_1)$ over $\mathcal{Z}_1$. Therefore we have
    \small
    \begin{align}\label{Eqn:lower_bound_Vn}
        V_n(z)=V_1(z_1)\!+\!\frac{1}{2}\sum_{i=2}^n z^2_i\geqslant\!\left(\!\Gamma_1 z^2_1\!+\!\frac{1}{2}\sum_{i=2}^n z^2_i\right)\!\geqslant k_1\|z\|^2,
    \end{align}
    \normalsize
    with $k_1=\min\{\Gamma_1,(1/2)\}$ for all $z\in \mathcal{Z}_1\times \mathbb{R}^{n-1}$. Further, from \textit{Lemma} \ref{lemma:beta2_upper_bounding_func} for arbitrary $z_1\in \mathcal{Z}_1$, there exists $\mathcal{S}$ and $\beta_2(|z_1|)=\Gamma_2 z^2_1$ such $\beta_2(|z_1|)\leqslant V_1(z_1)$ over $z\in \mathcal{S}$ and $z_1\in\mathcal{S}$. Therefore, we may now write
    \small
    \begin{align}\label{Eqn:upper_bound_Vn}
        V_n(z)=V_1(z_1)\!+\!\frac{1}{2}\sum_{i=2}^n z^2_i\leqslant  \!\left(\!\Gamma_2 z^2_1\!+\!\frac{1}{2}\sum_{i=2}^n z^2_i\right)\!\leqslant k_2\|z\|^2,
    \end{align}
    \normalsize
    for all $z\in \mathcal{S}\times \mathbb{R}^{n-1}$, where $k_2=\max\{\Gamma_2,(1/2) \}$. Hence from \eqref{Eqn:lower_bound_Vn} and \eqref{Eqn:upper_bound_Vn} we have \eqref{Eqn:bounds_Vn}.% Clearly we have $\mathcal{S}\times \mathbb{R}^{n-1}\subset \mathcal{Z}_1\times \mathbb{R}^{n-1}$, 
\end{proof}
Equipped with the two results above, we are now ready to carry on stability analysis for the designed closed-loop system.% in \eqref{Eqn:closed_loop_eqns}
\begin{theorem}\label{theorem:local_exponential_stability}
Consider the closed-loop system, as given in \eqref{Eqn:closed_loop_eqns}, designed under \textit{Assumption} \ref{Assumption:bounds_on_yd} and \textit{Assumption} \ref{Assumption:bounds_on_gi}. Starting from an initial condition $z_0 \coloneqq [z_{1_0},z_{({2:n})_0}]^T\in\Omega_{0}\coloneqq \mathcal{Z}_1\times \mathbb{R}^{n-1}$, the closed-loop system \eqref{Eqn:closed_loop_eqns} converges to the origin exponentially fast while $z_1(t)$ satisfies $z_1(t)\in\mathcal{Z}_1,\forall t\geqslant 0$.
\end{theorem}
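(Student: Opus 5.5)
Our plan is to use $V_n(z)=V_1(z_1)+\frac12\sum_{i=2}^n z_i^2$ as the Lyapunov function, first to get constraint satisfaction and then exponential convergence.

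\emph{Step 1: constraint satisfaction.} From \eqref{Eqn:dot_Vn_1}, $\dot V_n\leqslant 0$ along \eqref{Eqn:closed_loop_eqns} on $\mathcal{Z}_1\times\mathbb{R}^{n-1}$. The closed-loop vector field is smooth, hence locally Lipschitz, on this set. This uses Remark \ref{Rem:stabilizing_func_C_infinity}, smoothness of $f_i,g_i$, Assumption \ref{Assumption:bounds_on_gi}, and the fact that the $\alpha_i$ are smooth in the bounded $y_d^{(j)}$ (Assumption \ref{Assumption:bounds_on_yd}). We then apply Lemma \ref{lemma:lemma_1_KPT} with $w=z_{2:n}$, $U=\frac12\|w\|^2$, and $V_1$ from \eqref{Eqn:Design_V1_z1}, which is positive definite and blows up at $-k_a,k_b$ by Lemma \ref{Lemma:pos_defi_C_inf_ABLF}. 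This gives $z_1(t)\in\mathcal{Z}_1$ for all $t\geqslant 0$. Moreover $V_n(z(t))\leqslant V_n(z_0)$, so $z_1(t)$ stays in the sublevel set $\{V_1\leqslant V_n(z_0)\}$, a compact interval $[-\underline{a},\overline{b}]$ strictly inside $\mathcal{Z}_1$. Hence the solution exists for all time and remains bounded.

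\emph{Step 2: differential inequality.} We will use Lemma \ref{lemma:rational_func_upper_bound}. Its function $\tilde\beta(z_1)$ equals $\tilde\kappa_1\bigl[(k_a+k_b)z_1/((k_a+z_1)(k_b-z_1))\bigr]^2$, with the same $\tilde\kappa_1$ as in $\alpha_1$. The first term of \eqref{Eqn:dot_Vn_1} is therefore exactly $-\kappa_1\tilde\beta(z_1)\leqslant -\kappa_1 V_1(z_1)$, so
\begin{align*}
\dot V_n\leqslant -\kappa_1 V_1-\sum_{i=2}^n\kappa_i z_i^2\leqslant -\tilde k V_n,\quad \tilde k\coloneqq\min\{\kappa_1,2\kappa_2,\dots,2\kappa_n\}.
\end{align*}
This holds on all of $\mathcal{Z}_1\times\mathbb{R}^{n-1}$, and integrating gives $V_n(z(t))\leqslant V_n(z_0)e^{-\tilde k t}$. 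This is the reason for the choice of $\tilde\kappa_1$, and it makes the decay rate independent of $\Gamma_2$.

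\emph{Step 3: quadratic sandwich and conclusion.} We invoke Corollary \ref{corollary:bounds_Vn} with the set $\mathcal{S}$ chosen, via Lemma \ref{lemma:beta2_upper_bounding_func} and Remark \ref{Rem:bounded_k2}, to contain the invariant interval $[-\underline{a},\overline{b}]$ from Step 1. This yields $k_1,k_2$ with $k_1\|z\|^2\leqslant V_n\leqslant k_2\|z\|^2$ along the whole trajectory. Theorem \ref{theorem:exponetial_stability} with $p=2$ then gives
\begin{align*}
\|z(t)\|\leqslant\sqrt{k_2/k_1}\,\|z_0\|e^{-\tilde k t/2}.
\end{align*}

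The main obstacle is Step 3: guaranteeing that one fixed $\mathcal{S}$ contains the entire trajectory's $z_1$-range, so that $k_2$ is uniform in time. We plan to handle this by pinning $\mathcal{S}$ to the sublevel set of $V_n(z_0)$. We would construct $\Gamma_2$ at the endpoint of smaller magnitude, in the spirit of Lemma \ref{lem:V(-x)_greater_than_V(x)}, and enlarge $\delta_{\beta_2}$ so that $\mathcal{S}\supset[-\underline{a},\overline{b}]$. Forward invariance of the sublevel set then does the rest. A secondary subtlety is that the $\Gamma_2$-based upper bound is used only for the overshoot constant $k_2$; the rate comes from Step 2.
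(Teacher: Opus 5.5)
Your proposal is correct and follows the paper's proof essentially step for step. You use Lemma~\ref{lemma:lemma_1_KPT} on $V_n$ for constraint satisfaction, then combine $\dot V_n=-\kappa_1\tilde\beta(z_1)-\sum_{i=2}^n\kappa_i z_i^2$ with Lemma~\ref{lemma:rational_func_upper_bound} to get $\dot V_n\leqslant-\tilde k V_n$, and finish with Corollary~\ref{corollary:bounds_Vn} and Theorem~\ref{theorem:exponetial_stability}. Choosing $\mathcal{S}$ to contain the forward-invariant sublevel set $\{V_1\leqslant V_n(z_0)\}$ makes explicit that the trajectory stays in $D$, which the paper leaves implicit; but the bound $\|z(t)\|\leqslant\sqrt{k_2/k_1}\,\|z_0\|e^{-\tilde k t/2}$ needs only the global lower bound along the trajectory and the upper bound at $z_0$.
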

\begin{proof}
Consider the Lyapunov function candidate $V_n(z)$:
\begin{align}
    V_n(z)=V_1(z_1)+\sum_{i=2}^{n}\frac{1}{2}z^2_i.\notag
\end{align}
The function $V_n(z)$ is positive definite and continuously differentiable over $\Omega_0$. From \textit{Corollary} \ref{corollary:bounds_Vn}, there exists domain $D=\mathcal{S}\times \mathbb{R}^{n-1}\subset \Omega_0$ containing the origin $z=0$ such that
\begin{align}\label{Eqn:bounds_Vn_2}
    k_1\|z\|^2\leqslant V_n(z)\leqslant k_2\|z\|^2~~~\forall z\in D.
\end{align}
Furthermore, from the definition of $\tilde \beta(z_1)$ in \eqref{Eqn:rational_upper_bound}, the time derivative of $V_n(z)$, as given in \eqref{Eqn:dot_Vn_1}, along the closed-loop trajectories of \eqref{Eqn:closed_loop_eqns} may be further written as
\begin{align}
    \dot V_n(z)=-\kappa_1\tilde\beta(z_1)-\sum_{i=2}^n\kappa_i z^2_i.
\end{align}
From \eqref{Eqn:rational_upper_bound}, it is clear that $\tilde \beta(z_1)>0\ \forall z_1\in \mathcal{Z}_1\backslash \{ 0\}$. Hence, we have $\dot V_n(z)<0$ in $\Omega_0\backslash\{\mathbf{0}\}$. Thus, from \textit{Lemma} \ref{lemma:lemma_1_KPT}, we conclude that $z_1(t)$ satisfies $z_1\in \mathcal{Z}_1,\forall t\geqslant 0$. Now, from the inequality \eqref{Eqn:rational_func_inequality} of \textit{Lemma} \ref{lemma:rational_func_upper_bound} we may rewrite $\dot V_n(z)$ as
\begin{align}\label{Eqn:dot_Vn_with_tilde_beta}
    &\dot V_n(z)<-\kappa_1V_1(z_1)-\sum_{i=2}^n\kappa_i z^2_i\leqslant -\tilde k V_n(z),
\end{align}
$\forall z\in D$ and $ \forall t\geqslant 0$ with $\tilde k=\min\{\kappa_1, 2\kappa_2,\ldots,2\kappa_n \}$ and $\kappa_i$ as described in the design steps. Thus, from \textit{Theorem} \ref{theorem:exponetial_stability}, the origin, $z=0$, of the closed-loop system \eqref{Eqn:closed_loop_eqns} is exponentially stable.
From \eqref{Eqn:dot_Vn_with_tilde_beta}, we now have
\begin{align}\label{Eqn:Vn(z)_less_than_Vn(z0)}
    V_n(z)\leqslant V_n(z_0)e^{-\tilde k t},\forall z_0\in D.
\end{align}
From \eqref{Eqn:bounds_Vn_2}, we have $k_1\|z\|^2\leqslant V_n(z)\leqslant V_n(z_0)e^{-\tilde k t}$ from which we may write 
\begin{align}
    \|z\|\leqslant \left(\!\frac{V_n(z_0)}{k_1} \!\right)^{\!1/2}e^{-\tilde k t/2}.\notag
\end{align}
Furthermore, from \eqref{Eqn:bounds_Vn_2}, we may write $V_n(z_0)<k_2\|z_0\|^2$ for any arbitrary $z_{1_0}\in \mathcal{S}$. Hence, we have %the following inequality:
\begin{align}\label{Eqn:bound_on_norm(z)}
    \|z\|\leqslant \left(\!\frac{k_2}{k_1} \!\right)^{\!1/2}\|z_0\|e^{-\tilde k t/2}.
\end{align}
From \eqref{Eqn:Vn(z)_less_than_Vn(z0)} and \eqref{Eqn:bounds_Vn_2}, we have the following inequalities
\begin{align}
    &\Gamma_1 z^2_1<V_1(z_1)\leqslant V_n(z_0)e^{-\tilde k t}\leqslant k_2\|z_0\|^2 e^{-\tilde k t},\notag\\
    &\sum_{i=2}^n\frac{1}{2}z_i^2\leqslant V_n(z_0)e^{-\tilde k t}\leqslant k_2\|z_0\|^2 e^{-\tilde k t},
\end{align}
from which, we have the following bounds on $z_1$ and $z_{2:n}$:
\begin{align}\label{Eqn:expo_bound_on_z1(t)}
    |z_1|\leqslant \sqrt{\!\frac{k_2}{\Gamma_1} }~\|z_0\|e^{-\tilde k t/2},\|z_{2:n}\|\leqslant \sqrt{2k_2}~\|z_0\|e^{-\tilde kt/2}.
\end{align}
% Similarly, from \eqref{Eqn:Vn(z)_less_than_Vn(z0)} and \textit{Corollary} \ref{corollary:bounds_Vn} we may also write
% \begin{align}
%     \sum_{i=2}^n\frac{1}{2}z_i^2\leqslant V_n(z_0)e^{-\tilde k t}\leqslant k_2\|z_0\|^2 e^{-\tilde k t}, 
% \end{align}
% which further implies
% \begin{align}
    
% \end{align}
\end{proof}
\begin{remark}\label{Rem:all_about_symmetric_case}
Notice that \textit{Corollary} \ref{corollary:bounds_Vn} holds for the symmetric case $k_a=k_b$, as we have established the construction of upper and lower bounding class $\mathcal{K}$ functions for $V_1(z_1)$ with symmetric constraints in \textit{Section} \ref{Sec:Unified barrier Lyapunov function}. Furthermore, the inequality in \eqref{Eqn:rational_func_inequality} also holds, as shown in \textit{Remark} \ref{Rem:Rational_upperBound_a_equal_b}. Thus, the proposed design guarantees exponential stability even in the case of symmetric constraints.
%along with constructions in \textit{Section} \ref{Sec:Unified barrier Lyapunov function} and \textit{Lemma} \ref{lemma:rational_func_upper_bound} generalizes the exponential stability guarantees for the case of symmetric constraints.
\end{remark}
\begin{theorem}\label{theorem:bounds_zi_and_y(t)}
    For the given closed-loop system in \eqref{Eqn:closed_loop_eqns} the following statements hold:
    \begin{itemize}
        \item[(i)] Let $V_{n_0}$ be Lyapunov function $V_n(z)$ evaluated at $z_0$. Define positive constants $K_a$ and $K_b$ as
        \begin{align}
            &K_a\coloneqq k_a(1\!-\!\gamma^{1/k_a}_a), \gamma_a=\left[\!\frac{k_a}{k_a\!+\!k_b}\!\right]^{\!k_b}\!e^{-V_{n_0}}\\
            &K_b\coloneqq k_b(1-\gamma^{1/k_b}_b), \gamma_b=\left[\!\frac{k_b}{k_a\!+\!k_b}\!\right]^{\!k_a}\!e^{-V_{n_0}}.
        \end{align}
        The error state $z_1(t)$ is bounded by
        \begin{align}
            &-K_a\!\leqslant \!z_1(t)\!\leqslant\!K_b,\ \forall t\geqslant0.
        \end{align}
        \item[(ii)] All the stabilizing functions $\alpha_1,\alpha_2,\ldots,\alpha_n$ are bounded.
        \item[(iii)] The output constraint is not violated, i.e., $|y(t)|<k_y,\forall ~t\geqslant 0$.
    \end{itemize}
\end{theorem}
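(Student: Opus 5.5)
The plan is to exploit the monotonicity of $V_n$ along closed-loop trajectories and then invert the barrier function on each side of the origin. From the proof of \textit{Theorem} \ref{theorem:local_exponential_stability}, $\dot V_n\leqslant 0$ on $\Omega_0$. Hence $V_1(z_1(t))\leqslant V_n(z(t))\leqslant V_{n_0}$ for all $t\geqslant 0$, and also $\tfrac12\sum_{i\geqslant 2}z_i^2\leqslant V_{n_0}$. Everything in (i)–(iii) will be extracted from these two inequalities.

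For (i), I split according to the sign of $z_1$ and use the decomposition \eqref{Eqn:Va_plus_Vb}:
\begin{align*}
V_1(z_1)=k_a\log\frac{k_a}{k_a+z_1}+k_b\log\frac{k_b}{k_b-z_1}.
\end{align*}

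\emph{Case $z_1\in(-k_a,0]$.} The second term is the harmless one. Since $k_b\leqslant k_b-z_1<k_a+k_b$, it is bounded below by a constant of the form $k_b\log\bigl(k_b/(k_a+k_b)\bigr)$, which is negative. Moving this constant to the right-hand side gives
\begin{align*}
k_a\log\frac{k_a}{k_a+z_1}\leqslant V_{n_0}+k_b\log\frac{k_a+k_b}{k_b}.
\end{align*}
Exponentiating and taking the $1/k_a$ power gives $1+z_1/k_a\geqslant \gamma^{1/k_a}$, i.e.\ $z_1\geqslant -k_a(1-\gamma^{1/k_a})$. Here $\gamma$ is a constant of the form $[\cdot/(k_a+k_b)]^{k_b}e^{-V_{n_0}}$.

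\emph{Case $z_1\in[0,k_b)$.} Symmetrically, one bounds the first term below using $k_a+z_1<k_a+k_b$ and inverts the second term, which yields $z_1\leqslant K_b$.

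I expect the main obstacle to be bookkeeping: matching exactly the constants $\gamma_a,\gamma_b$ as stated. The crude lower bound on the non-dominant logarithm naturally produces the ratio $k_b/(k_a+k_b)$ in the $z_1<0$ case. To land on the stated $[k_a/(k_a+k_b)]^{k_b}$, I will check which of the two ratios is smaller, or which estimate of $\log(k_b/(k_b-z_1))$ the stated form corresponds to. Any smaller $\gamma$ still gives a valid (looser) bound, since $K_a$ is decreasing in $\gamma$, so I will choose the estimate that reproduces the stated constants.

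Independently of this, I verify that $\gamma_a,\gamma_b\in(0,1)$. This holds because each is a number in $(0,1)$ raised to a positive power, times $e^{-V_{n_0}}\leqslant 1$. Consequently $0<K_a<k_a$ and $0<K_b<k_b$, so $z_1$ stays in a compact subset $[-K_a,K_b]$ strictly inside $\mathcal{Z}_1$.

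For (ii), I argue by induction on the backstepping steps. On $[-K_a,K_b]$ the denominators $(k_a+z_1)$ and $(k_b-z_1)$ are bounded away from zero. Also $x_1=z_1+y_d$ is bounded by \textit{Assumption} \ref{Assumption:bounds_on_yd}, and $g_1\geqslant g_0>0$ by \textit{Assumption} \ref{Assumption:bounds_on_gi}. Since $f_1$ and $g_1$ are continuous and hence bounded on compact sets, $\alpha_1$ in \eqref{Eqn:alpha_1} is bounded. Then $x_2=z_2+\alpha_1$ is bounded, because $|z_2|\leqslant\sqrt{2V_{n_0}}$. Consequently $\dot\alpha_1$, a smooth function of bounded arguments $x_1,x_2,y_d,\dot y_d,\ddot y_d$, is bounded. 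Hence $\alpha_2$ in \eqref{Eqn:alpha_2} is bounded, and $x_3=z_3+\alpha_2$ is bounded. Repeating this through \eqref{Eqn:alpha_i}, using the bounds $Y_i$ on $y_d^{(i)}$ and the smoothness noted in \textit{Remark} \ref{Rem:stabilizing_func_C_infinity}, gives boundedness of every $\alpha_i$, including $u=\alpha_n$.

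For (iii), I write $y=z_1+y_d$. Combining (i) with \textit{Assumption} \ref{Assumption:bounds_on_yd} gives $-K_a-Y_{lb}<y<K_b+Y_{ub}$. By \eqref{Eqn:ka_kb_def} and $K_a<k_a$, $K_b<k_b$, this interval lies inside $(-k_y,k_y)$.
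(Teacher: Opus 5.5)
Your overall route is the paper's: bound $V_1(z_1(t))$ by $V_{n_0}$, split on the sign of $z_1$, replace the non-dominant factor by its value at the opposite barrier, and invert. Then bootstrap boundedness through the $\alpha_i$ and use $y=z_1+y_d$. Your (ii) and (iii) are fine, and (ii) is more careful than the paper's. You explicitly use $K_a<k_a$ and $K_b<k_b$ to keep $(k_a+z_1)(k_b-z_1)$ away from zero, you invoke $g_i>g_0$, and you rely only on $\dot V_n\leqslant 0$ on $\Omega_0$ rather than on the exponential estimate, which the paper only establishes for $z_0\in D$. The gap is the step in (i) where you intend to ``choose the estimate that reproduces the stated constants''. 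No valid estimate does this in general, because the stated constants are wrong in one of the two cases.

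Your inversion gives $\gamma_a=[k_b/(k_a+k_b)]^{k_b}e^{-V_{n_0}}$ for $z_1\leqslant 0$ and $\gamma_b=[k_a/(k_a+k_b)]^{k_a}e^{-V_{n_0}}$ for $z_1\geqslant 0$. These are exactly what the paper's own inequalities \eqref{Eqn:new_bound_z1_ka} and \eqref{Eqn:new_bound_z1_kb} yield; the paper then asserts the result with the two bases swapped. Your observation that a smaller $\gamma$ gives a looser bound rescues the stated $\gamma_a$ only when it is smaller than yours, i.e.\ when $k_a\leqslant k_b$. Likewise it rescues the stated $\gamma_b$ only when $k_b\leqslant k_a$. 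In the other case the claim is false. Take $z_{2:n}(0)=0$ and $z_1(0)=-z$, so that $V_{n_0}=V_1(-z)$. Then $z_1(0)\geqslant -K_a$ with the stated $\gamma_a$ is equivalent to $k_a z\leqslant k_b^2$, which fails for all $z\in(k_b^2/k_a,k_a)$ when $k_a>k_b$. The paper's own example $k_a=0.46$, $k_b=0.06$, $z_{1_0}=-0.4$ gives $K_a\approx 0.383$. Symmetrically, the stated $K_b$ fails for $z_1(0)\in(k_a^2/k_b,k_b)$ when $k_a<k_b$. So prove (i) with your constants. Alternatively, use the base $\min\{k_a,k_b\}/(k_a+k_b)$ in both $\gamma_a$ and $\gamma_b$, which recovers the stated form precisely where it is valid. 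These $\gamma$'s still lie in $(0,1)$, so $0<K_a<k_a$ and $0<K_b<k_b$ persist, and your (ii) and (iii) go through unchanged.
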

\begin{proof}
    \begin{itemize}
        \item[(i)] Without loss of generality, considering the initial time to be $t=0$, we may write \eqref{Eqn:Vn(z)_less_than_Vn(z0)} as
        \begin{align}
         V_1(z_1(t))\leqslant V_{n_0}e^{-\tilde k t}\leqslant V_{n_0},   
        \end{align}
        since $\tilde k>0$. Thus, we have 
        \begin{align}\label{Eqn:overal_inequality_for_Ka_Kb}
            \left(\!\frac{k_a}{k_a\!+\!z_1}\!\right)^{\!k_a}\left(\!\frac{k_b}{k_b\!-\!z_1}\!\right)^{\!k_b}\leqslant e^{V_{n_0}}.
        \end{align}
        Furthermore, the function $k_b/(k_b-z_1)$ is monotonically increasing in $(-\infty,0)$ and hence when $z_1\in (-k_a,0)$, we have $\bigl(k_b/(k_b+k_a)\bigr)<\bigl(k_b/(k_b-z_1)\bigr)<1$. We may thus use \eqref{Eqn:overal_inequality_for_Ka_Kb} to conclude the following: 
        \begin{align}
            \left(\!\frac{k_a}{k_a\!+\!z_1}\!\right)^{\!k_a}\!\left(\!\frac{k_b}{k_b\!+\!k_a}\!\right)^{\!k_b}\!\leqslant e^{V_{n_0}},\forall z_1\in (-k_a,0].\label{Eqn:new_bound_z1_ka}
        \end{align}
        Similarly, the function $k_a/(k_a+z_1)$ is monotonically decreasing in $(0,\infty)$ and hence when $z_1\in (0,k_b)$, we have $1>\bigl(k_a/(k_a+z_1)\bigr)>\bigl(k_a/(k_a+k_b)\bigr)$. So, we have
        \begin{align}
            \left(\!\frac{k_a}{k_a\!+\!k_b}\!\right)^{\!k_a}\!\left(\!\frac{k_b}{k_b\!-\!z_1}\!\right)^{\!k_b}\!\leqslant\! e^{V_{n_0}},\forall z_1\in [0,k_b).\label{Eqn:new_bound_z1_kb}
        \end{align}
        Thus, from \eqref{Eqn:new_bound_z1_ka} and \eqref{Eqn:new_bound_z1_kb}, we have $-K_a\!\leqslant\! z_1(t)\!\leqslant\! K_b,\forall t>0$.% Further, the bound on remaining error states $\|z_{2:n}\|\!\leqslant\!\sqrt{2V_{n0}}~e^{-\tilde k t/2}$ is straightforward to see since we have $\sum_{i=2}^n(z^2_i(t)/2)\leqslant {V_{n0}e^{-\tilde k t}}$ from \eqref{Eqn:Vn(z)_less_than_Vn(z0)}.
        \item[(ii)] In (i) we showed that $z_1,z_2,\ldots,z_n$ are bounded. Further, from \textit{Assumption} \ref{Assumption:bounds_on_yd}, the desired trajectory $y_d(t)$ and its successive derivatives are bounded. The boundedness of $z_1$ and $y_d$ ensures the boundedness of the state $x_1$. Since $\dot y_d$ is bounded and $f_1$ is a function of $x_1$, we may conclude that $\alpha_1$, defined in \eqref{Eqn:alpha_1}, is also bounded. The boundedness of $\alpha_1$ leads to the boundedness of $x_2$ as $x_2=z_2+\alpha_1$. This further implies boundedness of $\alpha_2$ (defined in \eqref{Eqn:alpha_2}), since it is a continuous function of $x_i(t),z_i(t),~i\in \{1,2\}$ and $\dot{y}_d(t),~\ddot{y}_d(t)$ which are all bounded. Since $x_3=z_3+\alpha_2$, we may now conclude that $x_3$ is bounded. Proceeding similarly, we can show that all the stabilizing functions $\alpha_3,\ldots,\alpha_n$ are bounded. Since $\alpha_n=u$, we conclude that the control input is also bounded.
        \item[(iii)] From the definition of the error state, $z_1(t)=x_1(t)-y_d(t)$, and the output equation, $y(t)=x_1(t)$, we have $y(t)=z_1(t)+y_d(t)$. From (i) we have $-K_a\leqslant z_1\leqslant K_b$, and from \textit{Assumption} \ref{Assumption:bounds_on_yd} we have $-Y_{lb}<y_d(t)<Y_{ub}$. Thus, from the relations discussed, we have:
        \small
        \begin{align}
            -(k_a\!+\!Y_{lb})\!<\!-(K_a\!+\!Y_{lb})\!<y(t)\!<\!K_b\!+\!Y_{ub}\!<\!k_b\!+\!Y_{ub}.\notag
        \end{align}
        \normalsize
        From \eqref{Eqn:ka_kb_def}, we may write $k_a+Y_{lb}=k_b+Y_{ub}=k_y$. Therefore, we  conclude that $|y(t)|<k_y,\forall ~t\geqslant 0$.
    \end{itemize}
\end{proof}
\begin{remark}\label{Rem:tuning_of_kappa1_alpha1}
    For highly asymmetric output constraints where $k_a\ll k_b$ or $k_a\gg k_b$, the control $u(t)$ can become very large due to $\tilde \kappa_1$ introduced in $\alpha_1$ (refer \eqref{Eqn:alpha_1}). This can be overcome by choosing a suitable control gain $\kappa_1$ or by scaling $\tilde \kappa_1$ by a small gain $\epsilon(\tilde \kappa_1)$. Such a scaling reduces the rate of convergence $\tilde k$ but does not alter the preceding analysis.
\end{remark}
\section{Numerical Examples}
We present two numerical examples. First, we shall consider the second-order system in \cite{tee2009barrier} for both the symmetric case: $k_a=k_b$, and the asymmetric case: $k_a\ne k_b$ with $k_a>k_b>0$. We then consider a third-order system and demonstrate scalability in terms of the order of the system. In the second example, we consider the case when $k_a<k_b$.

Consider the second-order system given by
\begin{align}
\begin{split}
    &\dot x_1=0.1 x^2_1+x_2\\
    &\dot x_2=0.1x_1 x_2-0.2x_1+(1+x^2_1)u,~~y=x_1.
\end{split}
\end{align}
The aim is to track the desired trajectory given by $y_d=0.2+0.3\sin{t}$, subject to the output constraint $|x_1|<k_y=0.56$. The desired output trajectory is bounded by: $-0.1<y_d(t)<0.5$, and from \eqref{Eqn:ka_kb_def}, we obtain $k_a=0.46$ and $k_b=0.06$.

Starting from $x_1(0)=-0.14$ and $x_2(0)=1.5$, with control gains $\kappa_1=\kappa_2=1$, the output trajectory for the proposed control $u(t)$ based on the unified ABLF and the ABLF in \cite{tee2009barrier} is shown in Fig.~\ref{fig:x1_ABLF_comparision}. 
\begin{figure*}[h!]
    \centering
    \begin{subfigure}{0.45\textwidth}
        \centering 
        \includegraphics[scale=0.45]{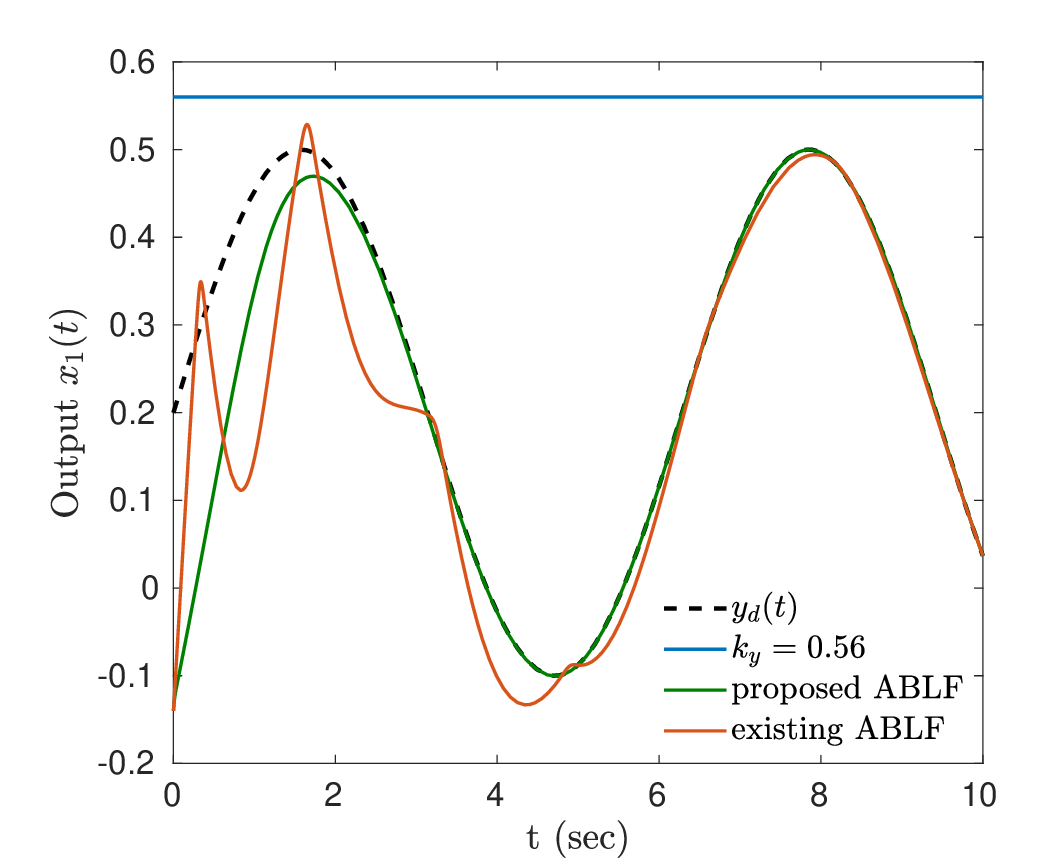}
        \caption{Output tracking for controllers based on the ABLF in \cite{tee2009barrier} and the proposed ABLF}
        \label{fig:x1_ABLF_comparision}
    \end{subfigure}
    \begin{subfigure}{0.45\textwidth}
        \centering
        \includegraphics[scale=0.48]{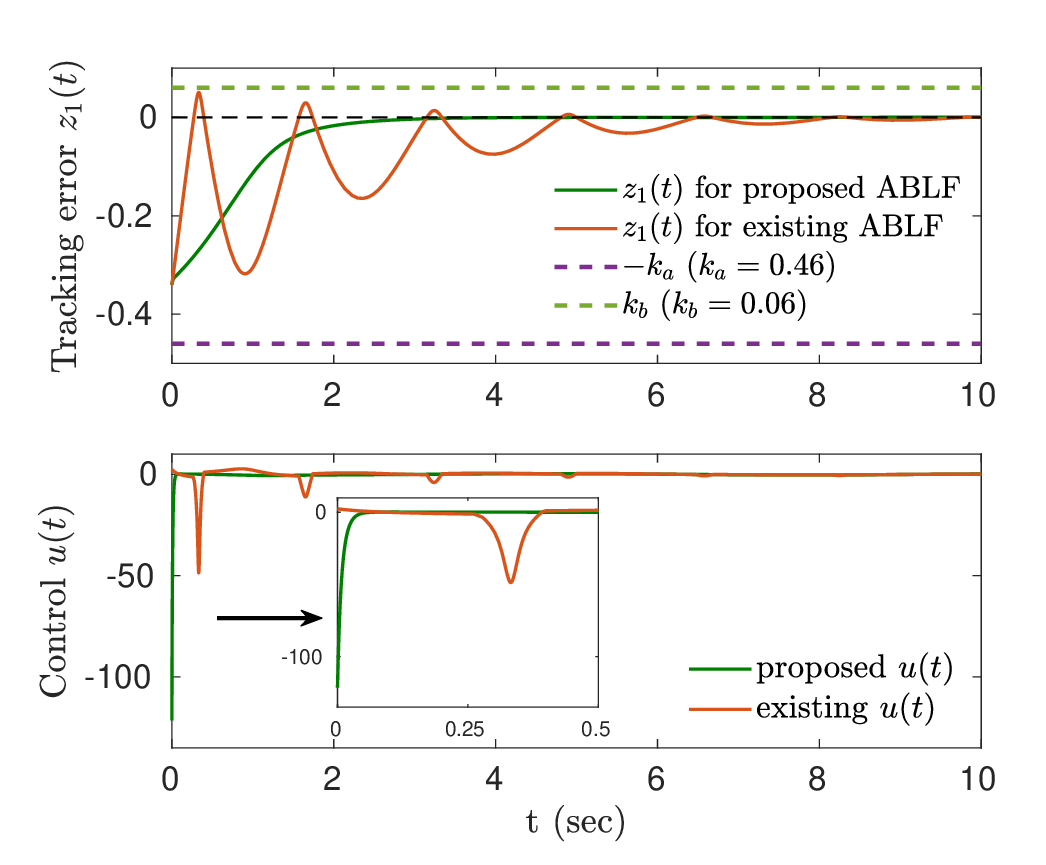}
        \caption{Tracking error $z_1(t)$ and control $u(t)$}
        \label{fig:z1_and_u(x10_m_point_14)}
    \end{subfigure}
    \caption{Illustration of performance for existing control and the proposed control}
    \label{fig:comparision_case}
\end{figure*}
We see that the state $x_1(t)$ converges to the desired trajectory while ensuring that the constraint $|x_1|<0.56$ is never violated. Fig.\ \ref{fig:z1_and_u(x10_m_point_14)} shows the corresponding tracking error $z_1(t)$ and control $u(t)$. Observe that the proposed $u(t)$ has a larger peak compared to the existing $u(t)$ (about 2.5 times larger as observed in the numerical simulation) correspondingly we see that $z_1(t)$ converges faster (about 3 times faster). Numerical simulations for initial conditions $x_1(0)\in \{-0.2,-0.08,0.12,0.24\}$ (correspondingly we have $z_{1_0}\in \{-0.4,-0.28,-0.08,0.04 \}$) and control gains $\kappa_1\!=\!\kappa_2\!=\!2$ are shown in Fig.\ \ref{fig:z1_asymmetric}. From Figs.\ \ref{fig:z1_asymmetric_proposed} and \ref{fig:z1_asymmetric_KPT}, we see that starting from any arbitrary $x_1(0)$ such that $z_{1_0}\in (-0.46,0.06)$, the performance of the proposed controller fares better than the existing control scheme proposed in \cite{tee2009barrier}. 
\begin{figure*}[h]
\centering
    \begin{subfigure}{0.49\textwidth}
        \centering
        \includegraphics[scale=0.43]{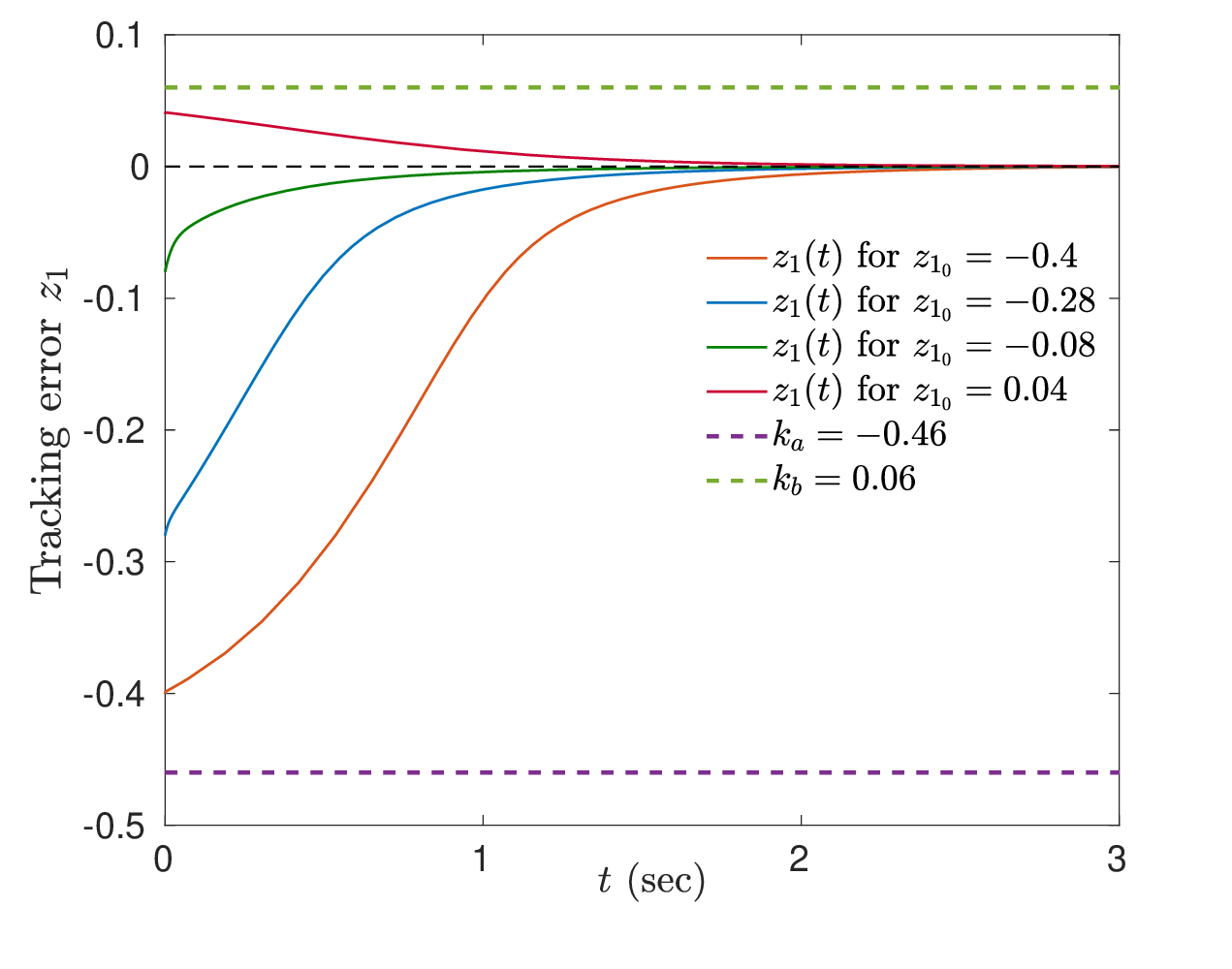}
        \caption{Tracking error $z_1(t)$ based on proposed ABLF}
        \label{fig:z1_asymmetric_proposed}
    \end{subfigure}
    \begin{subfigure}{0.44\textwidth}
        \centering
        \includegraphics[scale=0.43]{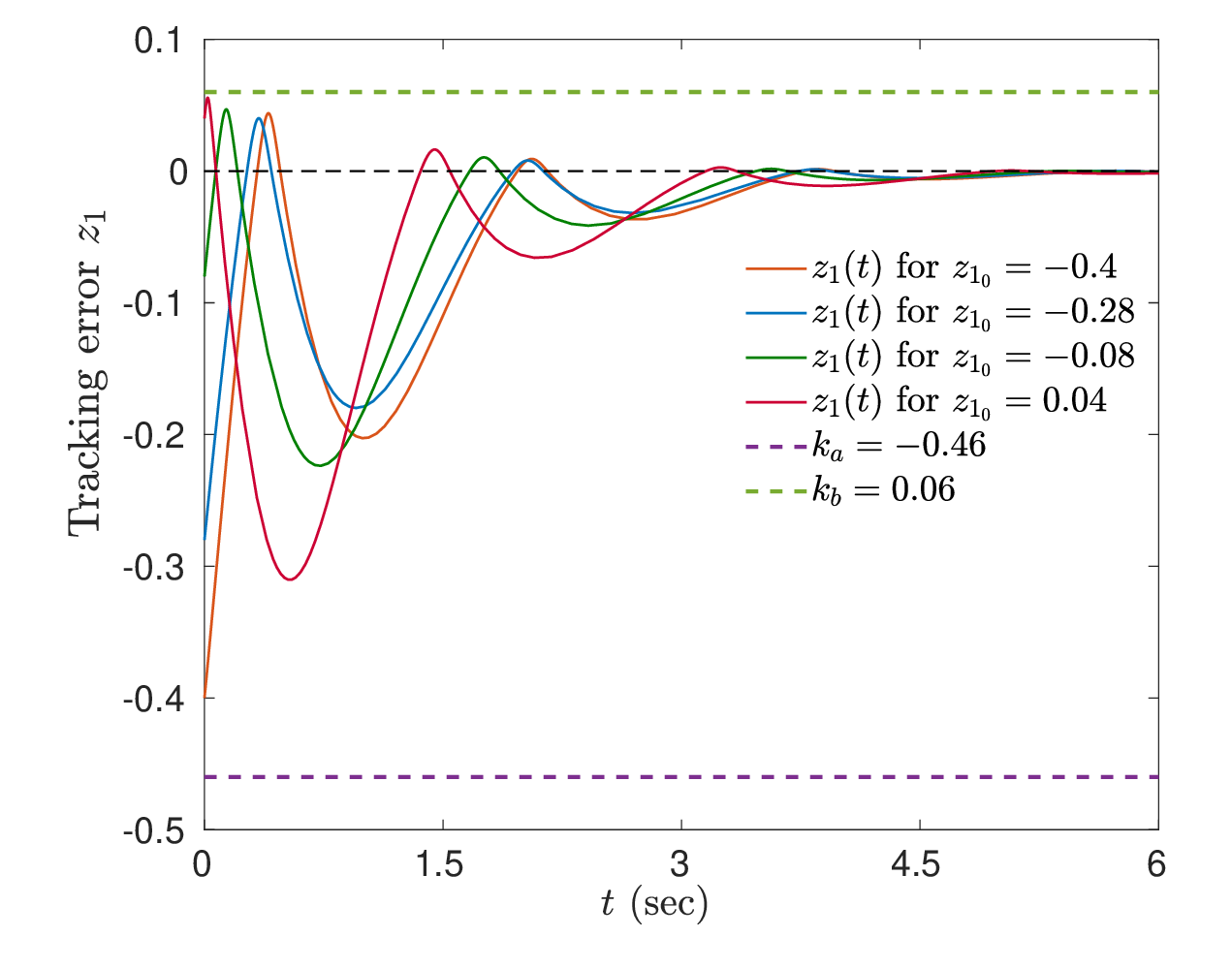}
        \caption{Tracking error $z_1(t)$ based on existing ABLF \cite{tee2009barrier}}
        \label{fig:z1_asymmetric_KPT}
    \end{subfigure}
    \caption{Tracking error $z_1(t)$ for various initial conditions when $k_a\ne k_b$}
    \label{fig:z1_asymmetric}
\end{figure*}
\begin{figure*}[h]
\centering
    \begin{subfigure}{0.47\textwidth}
        \centering
        \includegraphics[scale=0.41]{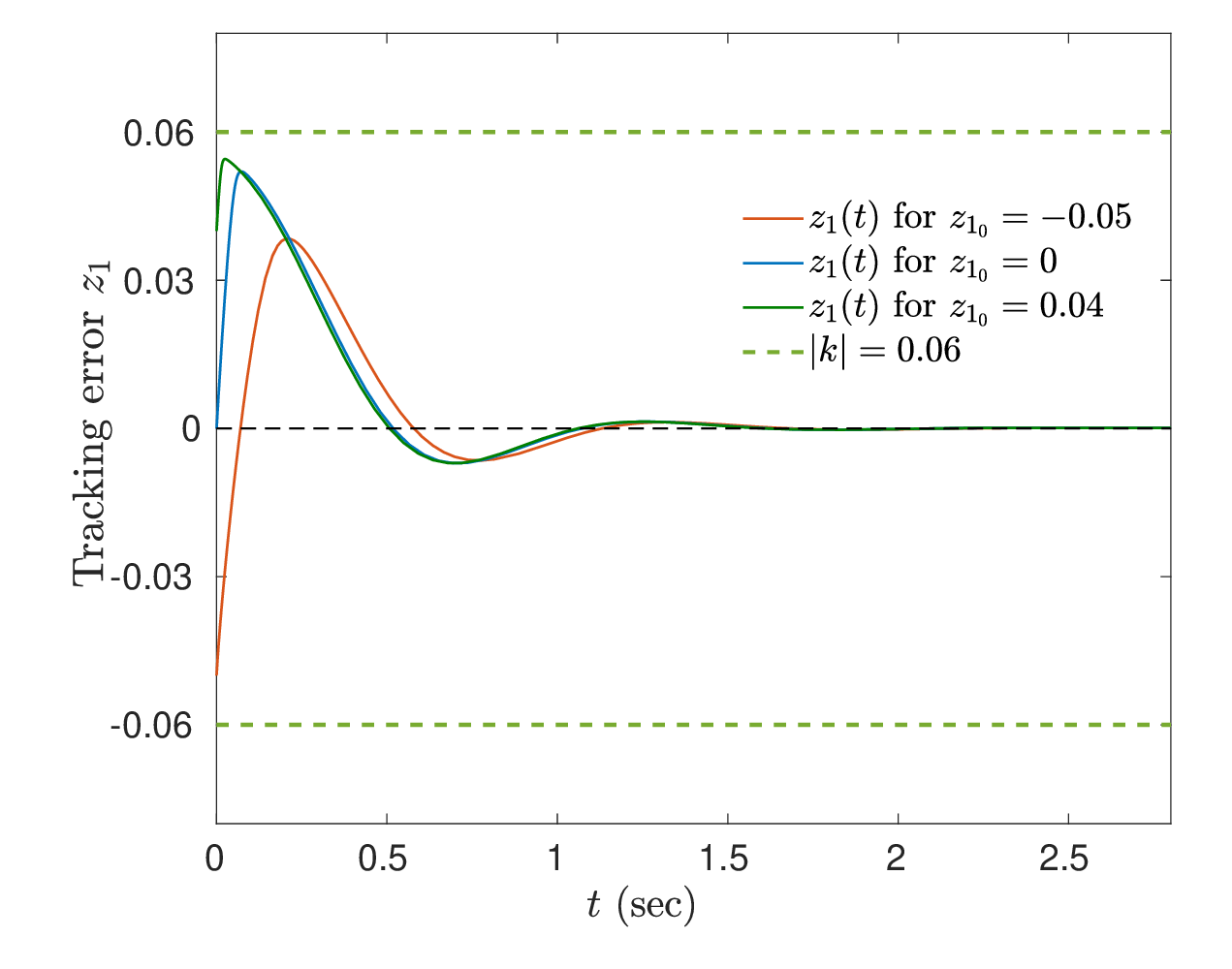}
        \caption{Tracking error $z_1(t)$ when proposed ABLF is used}
        \label{fig:z1_for_symmetric_proposed}
    \end{subfigure}
    \begin{subfigure}{0.47\textwidth}
        \centering
        \includegraphics[scale=0.43]{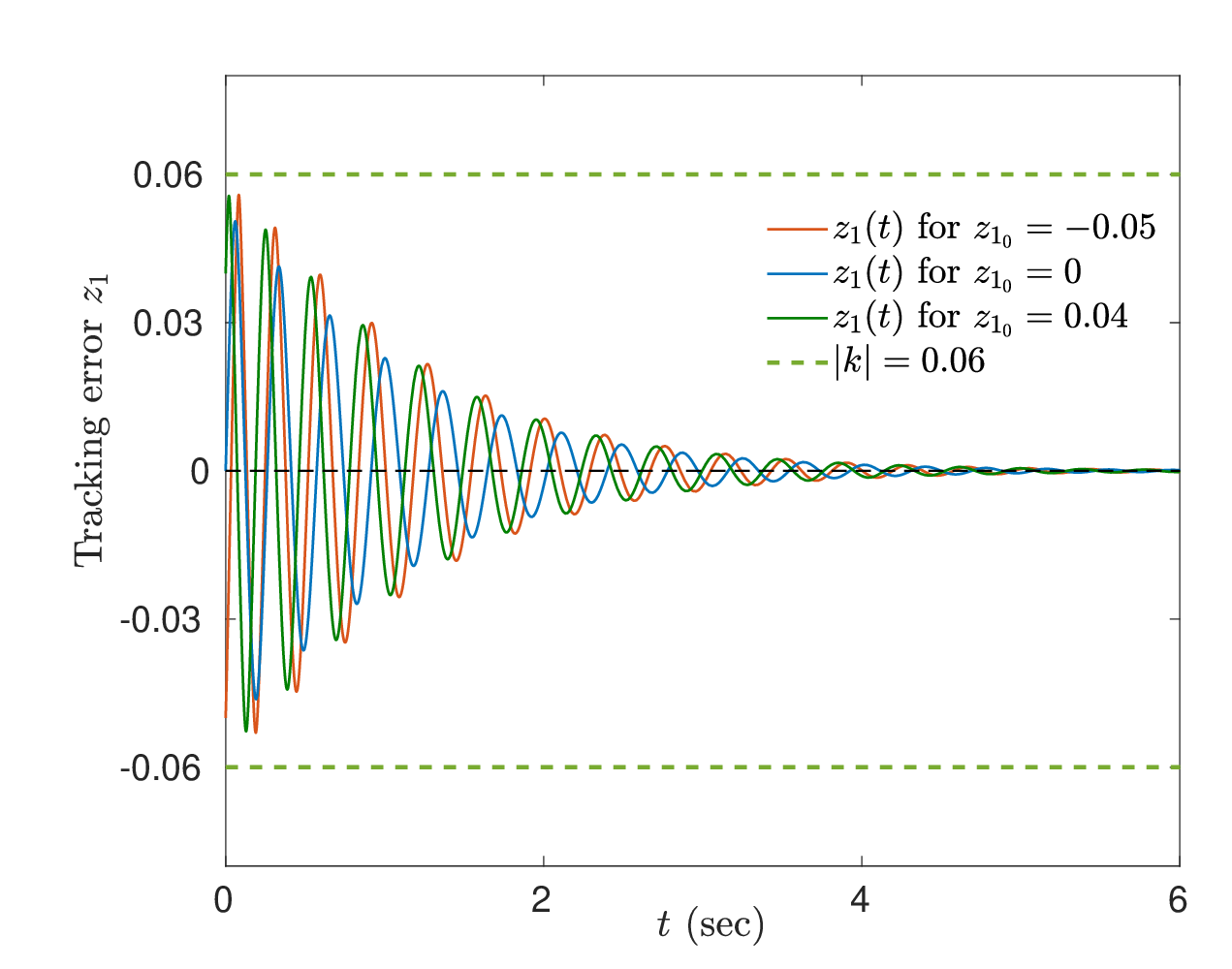}
        \caption{Tracking error $z_1(t)$ when existing SBLF from \cite{tee2009barrier} is used}
        \label{fig:z1_for_symmetric_case_KPT}
    \end{subfigure}
    \caption{Tracking error $z_1(t)$ for various initial conditions when $k_a= k_b$}
    \label{fig:z1_symmetric}
\end{figure*}

Next, we consider the case of symmetric constraints by choosing the bounds on $y_d(t)$ as $|y_{d}(t)|<0.5$. From this bound on $y_d$ and the bound on $x_1$, we obtain $k_a=k_b=0.06$. Fig.~ \ref{fig:z1_symmetric} shows $z_1(t)$ for initial conditions $x_1(0)\in \{0.15,0.2,0.24 \}$ (corresponding $z_{1_0}$'s are given by $z_{1_0}\in\{-0.05,0,0.04 \}$) with control gains $\kappa_1=\kappa_2=2$. From Fig.~ \ref{fig:z1_for_symmetric_proposed} and Fig.~ \ref{fig:z1_for_symmetric_case_KPT}, we see that performance under the proposed control is better when compared to the control scheme proposed in \cite{tee2009barrier}, based on the symmetric BLF. Note that the designed $u(t)$ is different owing to the construction for ensuring exponential stability. This holds despite the proposed ABLF reducing to the standard symmetric BLF from \cite{tee2009barrier} when $k_a=k_b$ and hence the difference in performance.

Let us now consider a third-order system given by
\begin{align}
    \begin{split}
        &\dot{x}_1=x^2_1(1-x_1)+x_2,~~\dot{x}_2=x_3\\
        & \dot{x}_3=u,~~y=x_1.
    \end{split}
\end{align}
The desired trajectory is given by $y_d=-0.2+0.3\sin{t}-0.5\cos{t}$, while the output is subject to the constraint $|x_1|<k_y=1$. The desired output trajectory satisfies $-0.2-\sqrt{0.34}<y_d(t)<-0.2+\sqrt{0.34}$ from which we select $Y_{lb}=0.79$ and $Y_{ub}=0.39$.  From \eqref{Eqn:ka_kb_def}, we now have $k_a=1-0.79=0.21$ and $k_b=1-0.39=0.61$. Fig.~\ref{fig:z1_order3sys} shows the tracking error $z_1(t)$ for initial conditions $x_1(0)\in \{-0.8,-0.65,-0.26,-0.15 \}$ (corresponding $z_{1_0}\in\{-0.1,0.05,0.44,0.55 \}$). To illustrate the difference in design for the third-order system, recall the ABLF in \cite{tee2009barrier}, given by \eqref{Eqn:ABLF_KPT}. Since the value of $p$ in $V_{\text{exist}}(z_1)$ must be an even integer satisfying the condition $p\geqslant n$, we choose $p=4$ as $n=3$. Therefore, we have
\small
\begin{align}
    V_{\text{exist}}=\frac{1}{4}\left[ q(z_1)~\text{log}\frac{k^4_b}{k^4_b-z^4_1} +(1-q(z_1))~\text{log}\frac{k^4_a}{k^4_a-z^4_1}\right],\notag
\end{align}
\normalsize
where $q=q(z_1)$ is a switching function and takes values as follows: $q(z_1)=1$ when $z_1>0$ and $q(z_1)=0$ otherwise. The equations for $\alpha_1(t),\alpha_2(t)$ and $u(t)$, designed using $V_{\text{exist}}$ based on the design steps in \cite{tee2009barrier}, are given by
\small
\begin{align}
    &\alpha_1=-x^2_1(1-x_1)+\dot{y}_d\notag\\
    &~~~~~~~-\kappa_1 \bigl[q(z_1)(k^4_b-z^4_1)+(1-q(z_1))(k^4_a-z^4_1)\bigr] z^3_1\notag\\
    &\alpha_2=\dot{\alpha}_1-\kappa_2 z_2-\left[\frac{q(z_1)}{k^4_b-z^4_1}+\frac{1-q(z_1)}{k^4_a-z^4_1} \right]z^{3}_1\notag\\
    &\alpha_3=u=\dot{\alpha}_2-\kappa_3 z_3-z_2.\notag
\end{align}
\normalsize
Due to the choice of $p$, we see that the control $u(t)$ obtained above has $z_1$ raised to higher powers, which may lead to very low control effort when $|z_1|$ is small. Therefore, it leads to slower convergence, which can be observed from Fig \ref{fig:z1_order3sys_KPT}. Since our proposed ABLF does not depend on the order of the system, the design is easily scalable to any higher order system without much effect on the performance of the control as seen from Fig \ref{fig:z1_order3sys_proposed}. Although we observe larger transient, as in the case of $z_{1_0}=0.55$ in Fig.~ \ref{fig:z1_order3sys_proposed}, this can be rectified by tuning the control gains $\kappa_1$, $\kappa_2$, and $\kappa_3$ as shown in Fig.~ \ref{fig:z1_various_gain_proposed}. For Fig.~ \ref{fig:z1_various_gain_proposed}, we consider the initial condition $x_1(0)=-0.15 ~(z_{1_0}=0.55)$ and plot the tracking error for various values of $\kappa_1,\kappa_2$ and $\kappa_3$. Observe that the overshoot and the speed of convergence improve while ensuring that the tracking error stays within the constraints. The increase in speed of convergence with the increase in gains can be explained from \eqref{Eqn:expo_bound_on_z1(t)}. Observe that $\tilde k=\min\{\kappa_1,2\kappa_2,2\kappa_3\}$, implying that an increase in the gains $\kappa_i,i=1,2,3$ increases $\tilde k$, and leads to faster convergence of the tracking error $z_1$.   
\begin{figure*}[h]
\centering
    \begin{subfigure}{0.47\textwidth}
        \centering
        \includegraphics[scale=0.43]{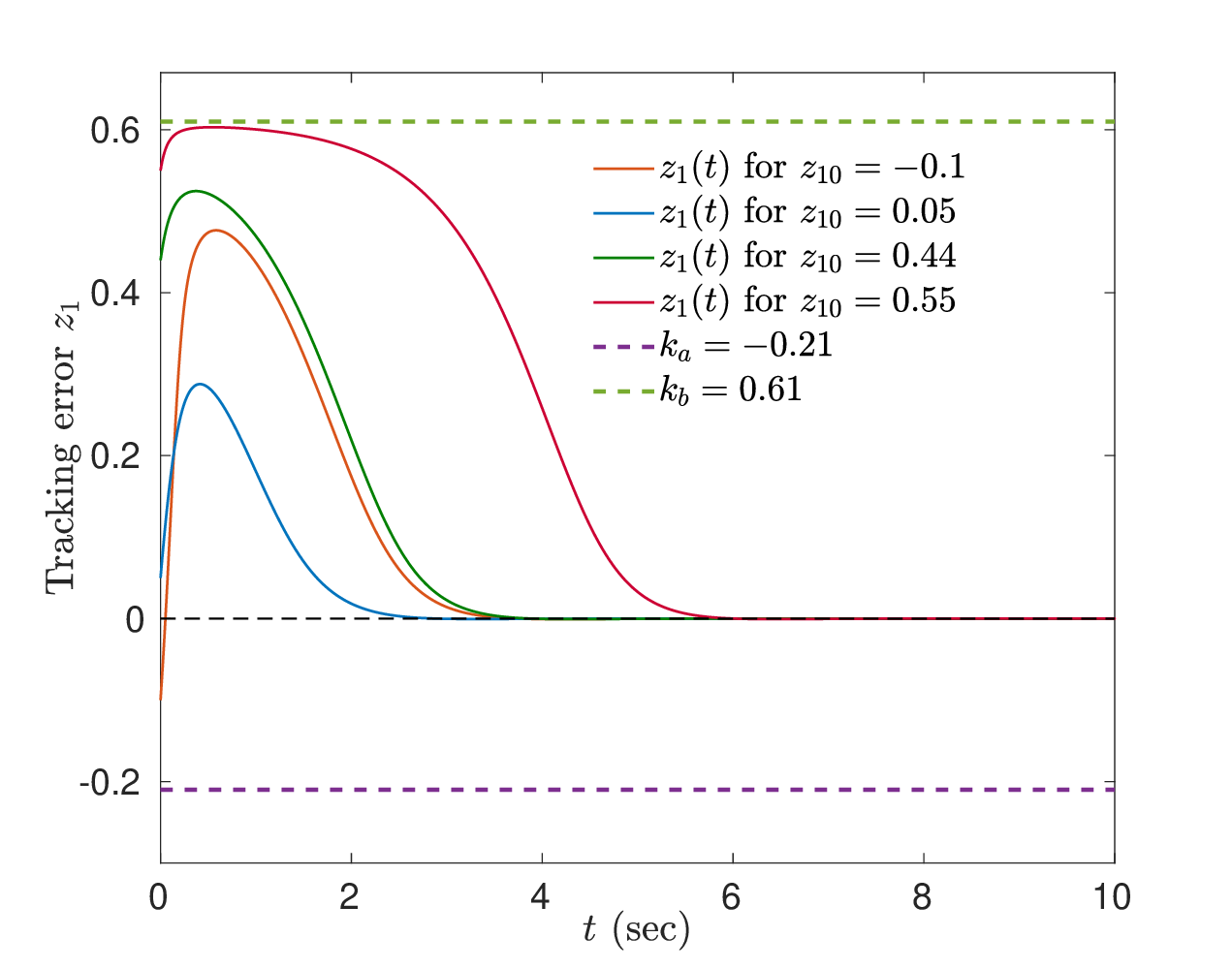}
        \caption{Tracking error $z_1(t)$ when proposed ABLF is used}
        \label{fig:z1_order3sys_proposed}
    \end{subfigure}
    \begin{subfigure}{0.47\textwidth}
        \centering
        \includegraphics[scale=0.44]{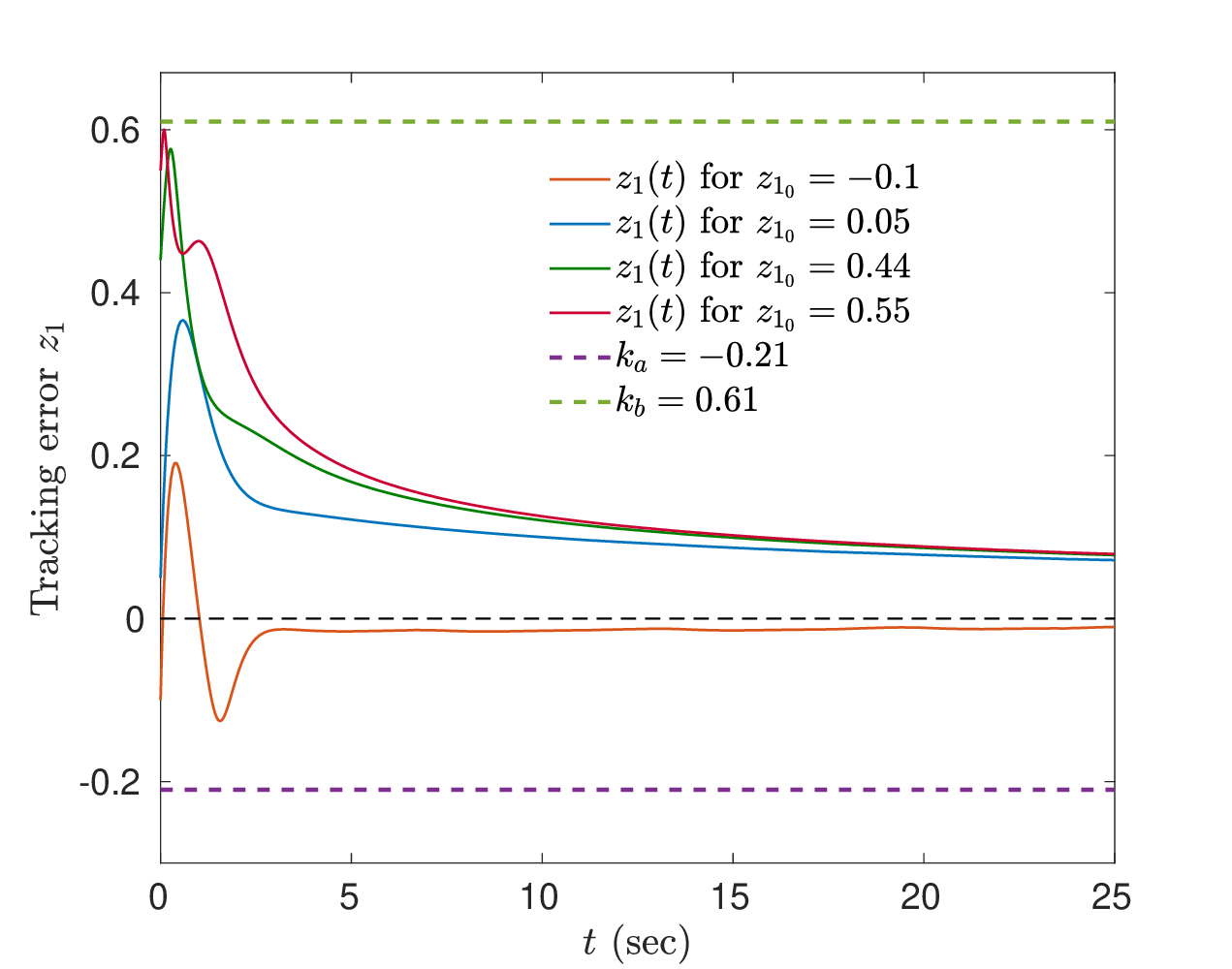}
        \caption{Tracking error $z_1(t)$ when existing ABLF from \cite{tee2009barrier} is used}
        \label{fig:z1_order3sys_KPT}
    \end{subfigure}
    \caption{Tracking error $z_1(t)$ for third order system for various initial conditions when $k_a< k_b$}
    \label{fig:z1_order3sys}
\end{figure*}
\begin{figure}[h!]
    \centering
    \includegraphics[scale=0.45]{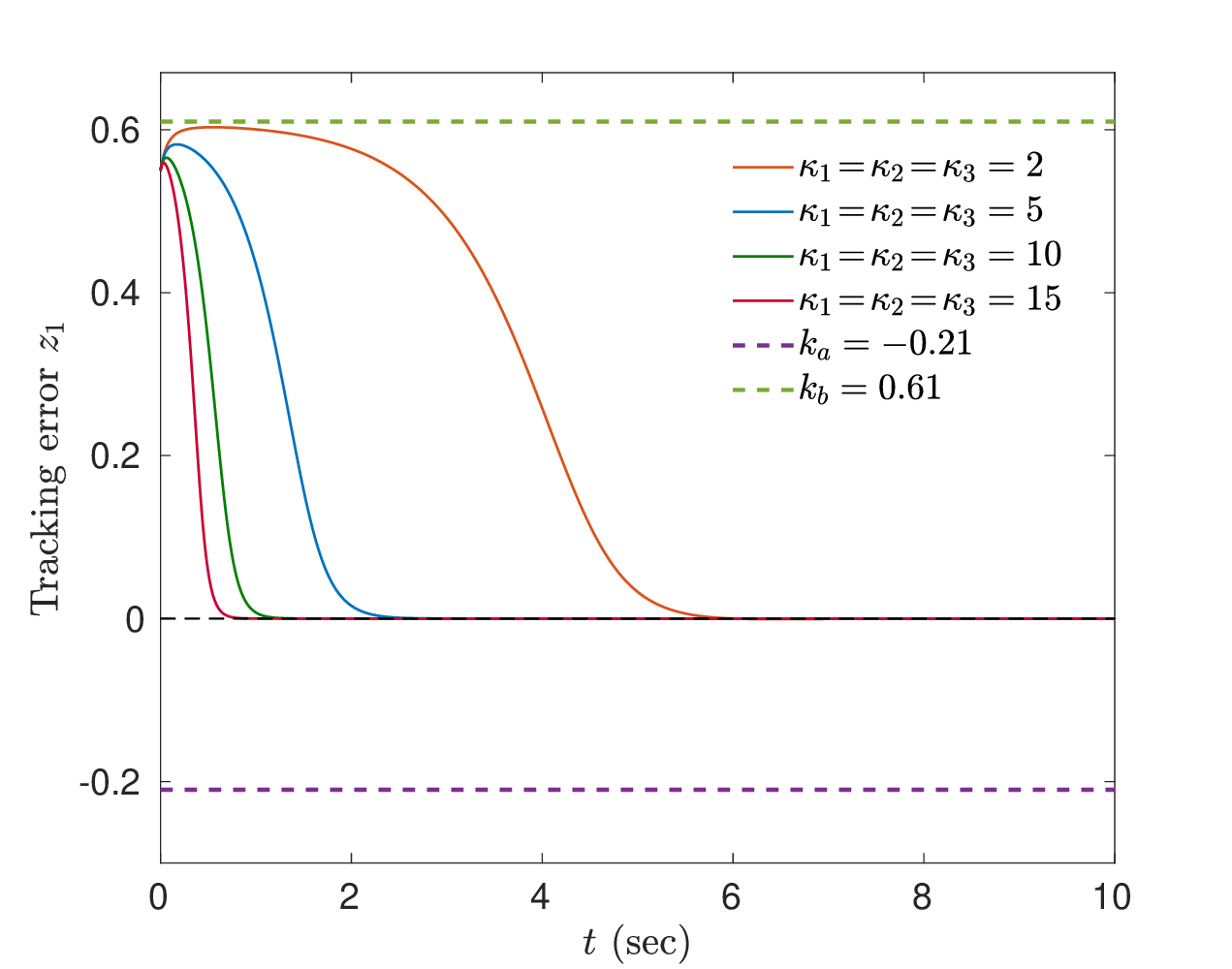}
    \caption{$z_1(t)$ for $z_{1_0}\!=\!0.55$ for various values of control gains}
    \label{fig:z1_various_gain_proposed}
\end{figure}

\section{Conclusion}
In this article, we presented a unified barrier Lyapunov function that addresses both symmetric and asymmetric output constraints. We showed that the proposed function is smooth and admits symmetric upper and lower bounding class $\mathcal{K}_{\infty}$ functions. Utilizing the unified barrier Lyapunov function, we proposed a control law for nonlinear single-input single-output systems with output constraints. The
proposed control law overcomes the disadvantages associated with the discontinuous switching function, thereby unifies the control design for both symmetric and asymmetric output constraints. Furthermore, we proved that under the proposed control law, the output tracking error converges to zero exponentially fast, while the output constraint requirements are satisfied at all times. We further illustrate the efficacy of the proposed control through numerical examples. 
%Furthermore, The proposed function generalizes the standard logarithmic SBLF.
% \appendices

% Appendixes, if needed, appear before the acknowledgment.
%\subsection{Footnotes}
\bibliographystyle{IEEEtran}
\bibliography{bibtex_database}

\end{document}